\documentclass[12pt,a4paper]{article}
\usepackage{authblk}
\usepackage[utf8]{inputenc}
\usepackage[T1]{fontenc}
\usepackage{amsmath, amssymb, amsthm}
\usepackage{geometry}
\usepackage{braket}   
\usepackage{amsmath,amssymb,amsfonts}
\usepackage{graphicx}
\usepackage{hyperref}
\usepackage[utf8]{inputenc}
\usepackage[english]{babel}
\usepackage{cite}
\usepackage{mathrsfs}
\usepackage{amsthm}
\usepackage{enumitem}
\usepackage{nccmath}

\newtheorem{definition}{Definition}
\newtheorem{question}[definition]{Question}
\newtheorem{lemma}[definition]{Lemma}
\newtheorem{remark}[definition]{Remark}
\newtheorem{theorem}[definition]{Theorem}
\newtheorem{example}[definition]{Example}
\newtheorem{proposition}[definition]{Proposition}

\newtheorem{corollary}[definition]{Corollary}
\newtheorem{conjecture}[definition]{Conjecture}

\newtheorem{memo}[definition]{Memo}

\def\squareforqed{\hbox{\rlap{$\sqcap$}$\sqcup$}}
\def\qed{\ifmmode\squareforqed\else{\unskip\nobreak\hfil
\penalty50\hskip1em\null\nobreak\hfil\squareforqed
\parfillskip=0pt\finalhyphendemerits=0\endgraf}\fi}
\def\endenv{\ifmmode\;\else{\unskip\nobreak\hfil
\penalty50\hskip1em\null\nobreak\hfil\;
\parfillskip=0pt\finalhyphendemerits=0\endgraf}\fi}
\def\Dbar{\leavevmode\lower.6ex\hbox to 0pt
{\hskip-.23ex\accent"16\hss}D}
\makeatletter
\def\url@leostyle{%
  \@ifundefined{selectfont}{\def\UrlFont{\sf}}{\def\UrlFont{\small\ttfamily}}}
\makeatother

\def\bcj{\begin{conjecture}}
\def\ecj{\end{conjecture}}
\def\bcr{\begin{corollary}}
\def\ecr{\end{corollary}}
\def\bd{\begin{definition}}
\def\ed{\end{definition}}
\def\bea{\begin{eqnarray}}
\def\eea{\end{eqnarray}}
\def\beq{\begin{equation}}
\def\eeq{\end{equation}}
\def\bal{\begin{aligned}}
\def\eal{\end{aligned}}
\def\bem{\begin{enumerate}}
\def\eem{\end{enumerate}}
\def\bex{\begin{example}}
\def\eex{\end{example}}
\def\bim{\begin{itemize}}
\def\eim{\end{itemize}}
\def\bl{\begin{lemma}}
\def\el{\end{lemma}}
\def\bma{\begin{bmatrix}}
\def\ema{\end{bmatrix}}
\def\bpf{\begin{proof}}
\def\epf{\end{proof}}
\def\bpp{\begin{proposition}}
\def\epp{\end{proposition}}
\def\bqu{\begin{question}}
\def\equ{\end{question}}
\def\br{\begin{remark}}
\def\er{\end{remark}}
\def\bt{\begin{theorem}}
\def\et{\end{theorem}}
\def\bmm{\begin{memo}}
\def\emm{\end{memo}}

\def\btb{\begin{tabular}}
\def\etb{\end{tabular}}

\newcommand{\nc}{\newcommand}

\def\s{\sigma}

\def\ps{\psi}

\nc{\bbA}{\mathbb{A}} \nc{\bbB}{\mathbb{B}} \nc{\bbC}{\mathbb{C}}
 \nc{\bbD}{\mathbb{D}} \nc{\bbE}{\mathbb{E}} \nc{\bbF}{\mathbb{F}}
 \nc{\bbG}{\mathbb{G}} \nc{\bbH}{\mathbb{H}} \nc{\bbI}{\mathbb{I}}
 \nc{\bbJ}{\mathbb{J}} \nc{\bbK}{\mathbb{K}} \nc{\bbL}{\mathbb{L}}
 \nc{\bbM}{\mathbb{M}} \nc{\bbN}{\mathbb{N}} \nc{\bbO}{\mathbb{O}}
 \nc{\bbP}{\mathbb{P}} \nc{\bbQ}{\mathbb{Q}} \nc{\bbR}{\mathbb{R}}
 \nc{\bbS}{\mathbb{S}} \nc{\bbT}{\mathbb{T}} \nc{\bbU}{\mathbb{U}}
 \nc{\bbV}{\mathbb{V}} \nc{\bbW}{\mathbb{W}} \nc{\bbX}{\mathbb{X}}
 \nc{\bbZ}{\mathbb{Z}}

 \nc{\bA}{{\bf A}} \nc{\bB}{{\bf B}} \nc{\bC}{{\bf C}}
 \nc{\bD}{{\bf D}} \nc{\bE}{{\bf E}} \nc{\bF}{{\bf F}}
 \nc{\bG}{{\bf G}} \nc{\bH}{{\bf H}} \nc{\bI}{{\bf I}}
 \nc{\bJ}{{\bf J}} \nc{\bK}{{\bf K}} \nc{\bL}{{\bf L}}
 \nc{\bM}{{\bf M}} \nc{\bN}{{\bf N}} \nc{\bO}{{\bf O}}
 \nc{\bP}{{\bf P}} \nc{\bQ}{{\bf Q}} \nc{\bR}{{\bf R}}
 \nc{\bS}{{\bf S}} \nc{\bT}{{\bf T}} \nc{\bU}{{\bf U}}
 \nc{\bV}{{\bf V}} \nc{\bW}{{\bf W}} \nc{\bX}{{\bf X}}
 \nc{\bZ}{{\bf Z}}

\nc{\cA}{{\cal A}} \nc{\cB}{{\cal B}} \nc{\cC}{{\cal C}}
\nc{\cD}{{\cal D}} \nc{\cE}{{\cal E}} \nc{\cF}{{\cal F}}
\nc{\cG}{{\cal G}} \nc{\cH}{{\cal H}} \nc{\cI}{{\cal I}}
\nc{\cJ}{{\cal J}} \nc{\cK}{{\cal K}} \nc{\cL}{{\cal L}}
\nc{\cM}{{\cal M}} \nc{\cN}{{\cal N}} \nc{\cO}{{\cal O}}
\nc{\cP}{{\cal P}} \nc{\cQ}{{\cal Q}} \nc{\cR}{{\cal R}}
\nc{\cS}{{\cal S}} \nc{\cT}{{\cal T}} \nc{\cU}{{\cal U}}
\nc{\cV}{{\cal V}} \nc{\cW}{{\cal W}} \nc{\cX}{{\cal X}}
\nc{\cZ}{{\cal Z}}

\nc{\hA}{{\hat{A}}} \nc{\hB}{{\hat{B}}} \nc{\hC}{{\hat{C}}}
\nc{\hD}{{\hat{D}}} \nc{\hE}{{\hat{E}}} \nc{\hF}{{\hat{F}}}
\nc{\hG}{{\hat{G}}} \nc{\hH}{{\hat{H}}} \nc{\hI}{{\hat{I}}}
\nc{\hJ}{{\hat{J}}} \nc{\hK}{{\hat{K}}} \nc{\hL}{{\hat{L}}}
\nc{\hM}{{\hat{M}}} \nc{\hN}{{\hat{N}}} \nc{\hO}{{\hat{O}}}
\nc{\hP}{{\hat{P}}} \nc{\hR}{{\hat{R}}} \nc{\hS}{{\hat{S}}}
\nc{\hT}{{\hat{T}}} \nc{\hU}{{\hat{U}}} \nc{\hV}{{\hat{V}}}
\nc{\hW}{{\hat{W}}} \nc{\hX}{{\hat{X}}} \nc{\hZ}{{\hat{Z}}}

\nc{\hn}{{\hat{n}}}

\nc{\as}{{\cal AS}}
	\nc{\app}{{\cal AP}}

\def\dim{\mathop{\rm Dim}}

\def\lin{\mathop{\rm span}}

\newcommand{\proj}[1]{| #1\rangle\!\langle #1 |}

\newcommand{\abs}[1]{|#1|}

\newcommand{\keywords}[1]{%
  \par\noindent\textbf{Keywords:} #1%
}

\allowdisplaybreaks[4]

\hypersetup{colorlinks,linkcolor={blue},citecolor={blue},urlcolor={red}}

\begin{document}

\title{Entanglement Distillation of some Rank-Five Symmetric NPT States
       in Two-Qutrit Systems}

\author{
  Yuwei Lei,
  Zihua Song\textsuperscript{1}\textsuperscript{*},
  Lin Chen\textsuperscript{1}\textsuperscript{*},
  and
  Mingju Liu\textsuperscript{1}\\
  \vspace{0.1cm}
  {\small \textsuperscript{1} LMIB (Beihang University), Ministry of Education, and School of Mathematical Sciences, Beihang University, Beijing 100191, China;}
}
\footnotetext[1]{* Corresponding authors: linchen@buaa.edu.cn (Lin Chen)}

\date{\today}

\maketitle

\begin{abstract}
Entanglement distillation is a fundamental task in quantum information processing. In this work, we investigate the distillability properties of a class of two-qutrit symmetric NPT states of rank five. We resolve the 1-distillability problem for this class by proving that the previously open interval of the eigenvalue parameter is 1-undistillable. For the 2-distillability, we uncover a structural obstruction showing that no Schmidt-rank-two vector has a negative expectation in the relevant subspace. We also perform numerical investigations to explore the 2-distillability beyond this obstruction.
\end{abstract}

\keywords{Entanglement distillation, symmetric state, 1-distillability}





\section{Introduction}

Quantum information technology is currently recognized as a key direction of the Fourth Industrial Revolution. Entanglement is now widely regarded as an essential resource for quantum information processing tasks~\cite{horodecki2009quantum}. However, environmental noise and experimental imperfections typically produce mixed entangled states, whose quality is often insufficient for direct use.
Entanglement distillation addresses this issue by converting many copies of a mixed entangled state into a pure entangled state asymptotically through local operations and classical communication (LOCC)~\cite{bennett1996concentrating}. The resulting pure states enable long-distance quantum communication and distributed quantum computing. However, not all mixed entangled states are distillable~\cite{divincenzo2000evidence}. Determining which states are distillable~\cite{horodecki1997inseparable} is a fundamental problem and has been listed among the five important open questions in quantum information~\cite{horodecki2020five}.

The distillability of an entangled state is closely connected to its matrix rank. Rank-two, rank-three, and rank-four NPT entangled states are known to be distillable~\cite{horodecki2003rank, chen2008rank, chen2016distillability}. The rank-five case is significantly more complex due to its increased structural degrees of freedom. For this rank, the relationship between the distribution of negative eigenvalues of the partial transpose and distillability remains unclear, although existing studies have characterized such NPT states by counting negative eigenvalues~\cite{rana2013negative, johnston2013nonpositive} or via Hermitian matrix inertia~\cite{shen2020inertias}.
Various criteria, including the positive partial transpose (PPT) criterion~\cite{peres1996separability, horodecki1996separability}, the reduction criterion~\cite{horodecki1999reduction}, and entanglement witnesses~\cite{terhal2000bell}, have been developed to address distillability. In bipartite systems, all two-qubit entangled states and all \(2\times n\) NPT entangled states are distillable~\cite{horodecki1997inseparable}. However, higher-dimensional systems can contain PPT entangled states, also known as bound entangled states, which exhibit limited distillability~\cite{horodecki1997separability}. Werner states represent a key class of NPT states for investigating general distillability~\cite{werner1989quantum}. Nevertheless, the asymptotic distillation framework introduces multi-copy scenarios~\cite{watrous2004many}, while the inherent complexity of LOCC operations~\cite{chitambar2014everything} and the use of tools such as the Choi-Jamio{\l}kowski isomorphism~\cite{choi1975completely} add further mathematical challenges to the determination of distillability.
A recent study~\cite{song2026entanglement} systematically investigated the 1-distillability of a family of rank-five symmetric bipartite qutrit entangled states. By imposing equality conditions on four of the five eigenvalues, five classes of parametrized states were constructed. For four of these classes, the 1-distillability was fully established over the entire parameter range. However, in the remaining class, where the distinguished eigenvalue is associated with the eigenvector $|e_5\rangle$, distillability remains unresolved for a specific parameter interval, namely $\left[\frac{24\sqrt{2} - 33}{7},\ \frac{33 - 12\sqrt{6}}{25}\right)$. Within this interval, the state is known to be NPT, yet its 1-distillability has not been determined. 
A single example at $\lambda_5=1/7$ was shown to be 1-undistillable, but the general case remains open. 

In this paper, we build upon the above line of research by investigating the distillability of rank-five NPT entangled states $\rho$ studied in the recent work~\cite{song2026entanglement}. We begin with Definition~\ref{def:distill}, which formalizes the notion of $n$-distillability under LOCC and serves as the central concept of this work. Lemma~\ref{lem:principal_minors} collects linear algebraic criteria for testing positive semidefiniteness of Hermitian matrices. Lemma~\ref{le:srxsr} establishes the multiplicativity of Schmidt rank under tensor products, which will be used to analyse the rank structure of vectors in certain subspaces. 
Lemma~\ref{le:hyperplane} states that every hyperplane in a multipartite Hilbert space is spanned by product vectors. This property enables us to construct product vectors orthogonal to prescribed subspaces, a key step in deriving contradictions in the distillability analysis. 
Proposition~\ref{pro:1-distill} summarises the main results of~\cite{song2026entanglement} for a family of states $\rho$ diagonal in the five orthonormal pure states given in Eq.~\eqref{eq:rho}, with positive eigenvalues $\lambda_1,\ldots,\lambda_5$. 
It specifies precisely the parameter regions for which $\rho$ is PPT, NPT, or 1-distillable. However, the distillability of $\rho$ remains open when $\lambda_5$ lies in the interval $\left[\frac{24\sqrt{2}-33}{7},\frac{33-12\sqrt{6}}{25}\right)$. Example~\ref{ex:1/7} demonstrates that at $\lambda_5=\frac{1}{7}$, the state $\rho$ in Proposition~\ref{pro:1-distill} is 1-undistillable, thereby providing at least one explicit instance of such behaviour. 
Proposition~\ref{pro:1-distill} left the 1-distillability of $\rho$ undetermined for $\lambda_5$ in $\left[\frac{24\sqrt{2}-33}{7},\frac{33-12\sqrt{6}}{25}\right)$. We now resolve this by showing the states are 1-undistillable. For any rank-two projection $P$ on subsystem 
$A$, the operator $(P\otimes I_B)\rho^{\Gamma}(P^{\dagger}\otimes I_B)$ reduces to two forms. Applying Lemma~\ref{lem:principal_minors} shows that both forms are positive semidefinite for that interval. 
Proposition~\ref{pro:1-undistill} then establishes 1-undistillability for the symmetric rank-five NPT state $\rho$ of Proposition~\ref{pro:1-distill} on that interval. Together with Proposition~\ref{pro:1-distill} and Example~\ref{ex:1/7}, this settles the 1-distillability problem for the family.
For the same interval, we consider 2-distillability. The spectral decomposition of $\rho^{\Gamma}$ gives that of $(\rho^{\Gamma})^{\otimes 2}$. To prove 2-distillability, it suffices to find a Schmidt-rank-two vector $\ket{\psi}$ with $\bra{\psi}\sigma\ket{\psi}<0$, where $\sigma=(\rho^\Gamma)^{\otimes 2}$. We examine vectors in the negative subspace $\mathcal{N}(\sigma)$. A detailed analysis shows no such vector lies in $\mathcal{N}(\sigma) \oplus \operatorname{span}\{\ket{a_9,a_9}\}$. Proposition~\ref{pro:Ns} formalises this, stating that any Schmidt-rank-two $\ket{\psi}$ with $\bra{\psi}\sigma\ket{\psi}<0$ cannot be contained in that 17-dimensional subspace. We turn to a numerical investigation of the 2-distillability of $\rho$. 
We present the general form of a Schmidt-rank-two state $\ket{\psi}$, derive the condition for $\ket{\psi}$ to have Schmidt rank two, and obtain a compact expression for $\bra{\psi}\sigma\ket{\psi}$. This expression is represented as a quadratic form $\mathbf{d}^{\dagger}M\mathbf{d}$. Due to the complexity of this quadratic form, we decompose it into a sum of 16 lower-dimensional quadratic forms, $\sum_{i=1}^{16} \mathbf{v}_i^\dagger M_i \mathbf{v}_i$, where each $M_i$ has nonzero matrix elements.

The rest of this paper is organized as follows. 
In Sec.~\ref{sec:Pre}, we introduce the primary facts and knowledge used in this work. In Sec.~\ref{sec:result}, we present the main results. Finally, we conclude in Sec.~\ref{sec:conclusion}.
\section{Preliminaries}
\label{sec:Pre}
In this section, we introduce the main techniques and facts used in this paper. Let $\mathcal{H} = \mathcal{H}_{A}\otimes \mathcal{H}_{B}$ be the bipartite Hilbert space with $\dim \mathcal{H}_{A} = M$ and $\dim \mathcal{H}_{B} = N$. We study bipartite quantum states $\rho$ on $\mathcal{H}$. We denote the range and kernel of a linear map $\rho$ by $\mathcal{R}(\rho)$ and $\ker \rho$, respectively. 
Unless stated otherwise, the states will not be normalized. We denote orthonormal bases of $\mathcal{H}_{A}$ and $\mathcal{H}_{B}$ by $\{|i\rangle_{A}:i = 0,1,\dots ,M - 1\}$ and $\{|j\rangle_{B}:j = 0,1,\dots ,N - 1\}$, respectively. 

The distillability problem involves many-copy states from a composite system. 
Let $\rho_{A_iB_i}$ be an $M_i \times N_i$ state of rank $r_i$ acting on the Hilbert space $\mathcal{H}_{A_i} \otimes \mathcal{H}_{B_i}$, $i=1,2$. Consider a state $\rho$ on the composite system $A_1, B_1, A_2, B_2$, acting on $\mathcal{H}_{A_1} \otimes \mathcal{H}_{B_1} \otimes \mathcal{H}_{A_2} \otimes \mathcal{H}_{B_2}$. By interchanging the two middle tensor factors, we may regard $\rho$ as a bipartite state on $\mathcal{H}_A \otimes \mathcal{H}_B$, where $\mathcal{H}_A = \mathcal{H}_{A_1} \otimes \mathcal{H}_{A_2}$ and $\mathcal{H}_B = \mathcal{H}_{B_1} \otimes \mathcal{H}_{B_2}$. In this case we write $\rho = \rho_{A_1A_2:B_1B_2}$. Thus $\rho$ is an $M_1M_2 \times N_1N_2$ state whose rank does not exceed $r_1r_2$. In particular, for the tensor product $\rho = \rho_{A_1B_1} \otimes \rho_{A_2B_2}$, it is easy to verify that $\rho$ is an $M_1M_2 \times N_1N_2$ state of rank exactly $r_1r_2$. The above construction readily generalizes to the tensor product of $N$ states $\rho_{A_iB_i}$, $i=1,\ldots,N$, which yields a bipartite state on the Hilbert space $\mathcal{H}_{A_1,\ldots,A_N} \otimes \mathcal{H}_{B_1,\ldots,B_N}$. When each $\mathcal{H}_{A_i} \otimes \mathcal{H}_{B_i} = \mathcal{H}$, this space is denoted by $\mathcal{H}^{\otimes N}$. 
With these preparations, we are now in a position to define the distillability of entangled states.
\begin{definition}
\label{def:distill}
A bipartite state $\rho$ is $n$-distillable under LOCC if there exists a Schmidt-rank-two state $|\psi \rangle \in \mathcal{H}^{\otimes n}$ such that $\langle \psi |(\rho^{\otimes n})^{\Gamma}|\psi \rangle < 0$. Here, the state $|\psi \rangle$ can be written as the superposition of two pure product states. Equivalently, $\rho$ is $n$-distillable under LOCC if there exists a rank-two projection operator $P$ on subsystem $A^n$ such that the matrix $(P\otimes I_{B^n})(\rho^{\otimes n})^{\Gamma}(P^{\dagger}\otimes I_{B^n})$ has at least one negative eigenvalue. If a finite $n$ exists then we say that $\rho$ is distillable. If no such $n$ exists, $\rho$ is called undistillable.
\end{definition}
An entangled state $\rho$ that is not distillable is referred to as bound entangled. PPT entangled states constitute a prominent class of bound entangled states. The longstanding distillability problem asks whether there exist bound entangled states that are NPT. Next, we recall some fundamental results from linear algebra that are useful for determining whether a given matrix is positive semidefinite.
\begin{lemma}\label{lem:principal_minors}
    For a given $n \times n$ Hermitian matrix $A$, we have 
    \begin{enumerate}[label=(\roman*)]
        \item $A$ is positive semidefinite if and only if all principal minors of $A$ are nonnegative;
        \item If the first $n-1$ leading principal minors (respectively, the last $n-1$ trailing principal minors) of $A$ are positive and $\det A \geq 0$, then $A$ is positive semidefinite. 
    \end{enumerate}
\end{lemma}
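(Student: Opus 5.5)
For part (i), I will show both directions separately. For necessity, assume $A\succeq 0$ and let $S\subseteq\{1,\dots,n\}$ be any index set. The principal submatrix $A_S$ is the compression $E_S^{\dagger}AE_S$, where $E_S$ is the $n\times|S|$ matrix whose columns are the standard basis vectors indexed by $S$. Since $x^{\dagger}A_Sx=(E_Sx)^{\dagger}A(E_Sx)\ge 0$, the matrix $A_S$ is Hermitian positive semidefinite. Its eigenvalues are therefore nonnegative, so $\det A_S\ge 0$.

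For sufficiency, I will use the characteristic polynomial:
\[
\det(A+tI)=\sum_{k=0}^{n} E_k(A)\,t^{n-k},
\]
where $E_k(A)$ is the sum of all $k\times k$ principal minors. If every principal minor is nonnegative, then every coefficient $E_k(A)$ is nonnegative and $E_0=1$. Hence $\det(A+tI)>0$ for all $t>0$. Because $\det(A+tI)=\prod_i(\lambda_i+t)$ with real eigenvalues $\lambda_i$, a negative eigenvalue $\lambda_j$ would make this product vanish at $t=-\lambda_j>0$. That is impossible, so all $\lambda_i\ge 0$ and $A\succeq 0$.

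For part (ii), I treat the leading case; the trailing case follows by applying it to $PAP^{T}$ with $P$ the reversal permutation. Let $A_{n-1}$ be the leading $(n-1)\times(n-1)$ block. Sylvester's criterion, which follows by induction via the Cholesky factorization, gives $A_{n-1}\succ 0$ because its leading minors are the first $n-1$ leading minors of $A$, all positive. Write
\[
A=\begin{pmatrix}A_{n-1}&b\\ b^{\dagger}&c\end{pmatrix}.
\]
The Schur complement factorization is
\[
A=L\begin{pmatrix}A_{n-1}&0\\0&s\end{pmatrix}L^{\dagger},\qquad s=c-b^{\dagger}A_{n-1}^{-1}b,
\]
with $L$ unit lower triangular. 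This gives $\det A=\det A_{n-1}\cdot s$. Since $\det A_{n-1}>0$ and $\det A\ge 0$, we get $s\ge 0$. The middle block-diagonal matrix is then positive semidefinite, and congruence by the invertible $L$ preserves positive semidefiniteness. Therefore $A\succeq 0$.

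The only delicate point is in (ii). Strict positivity of the first $n-1$ minors is essential, because merely nonnegative leading minors do not suffice, as $\mathrm{diag}(0,-1)$ shows. Strict positivity is exactly what makes $A_{n-1}$ invertible, so that the Schur complement is defined. I will emphasize this in the argument.
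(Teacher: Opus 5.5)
Your proof is correct. The paper gives no proof of this lemma; it introduces it among the ``fundamental results from linear algebra'' and quotes it. So there is no argument in the paper to compare yours against, and your proposal supplies a self-contained justification for what the paper takes for granted.

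Both halves use the standard routes. For (i), you use nonnegativity of the coefficients $E_k(A)$ of $\det(A+tI)$ to exclude a negative eigenvalue. For (ii), you use the factorization $A=L\,(A_{n-1}\oplus s)\,L^{\dagger}$ with $\det A=\det A_{n-1}\cdot s$, which turns $\det A\ge 0$ into $s\ge 0$. You could avoid the separate appeal to Cholesky: Sylvester's criterion for $A_{n-1}\succ 0$ is the same Schur-complement step iterated with strict inequalities, since $s_k=\det A_k/\det A_{k-1}>0$ at each stage.

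Your emphasis on strict positivity is also what matters for how the paper uses (ii). As a $9\times 9$ matrix, $\alpha_2$ has vanishing leading minors of orders $7$ and $8$ (its last three rows and columns are zero), so (ii) does not apply to it as literally stated. The conclusion holds because the argument is really applied to the $6\times 6$ nonzero block, where $D_1,\dots,D_6>0$ gives positive definiteness by Sylvester's criterion, and adjoining the zero block preserves positive semidefiniteness.
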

In the study of the 2-distillability problem, one often needs to determine whether the Schmidt rank of an entangled state equals two. We now recall some basic facts concerning the Schmidt rank.
\begin{lemma}
\label{le:srxsr}
Let $\ket{x},\ket{y}\in \bbC^m \otimes \bbC^n$ have Schmidt rank $a$ and $b$, respectively. Then the bipartite vector $\ket{x}\otimes\ket{y}\in \bbC^{m^2} \otimes \bbC^{n^2}$ has Schmidt rank $ab$.
\end{lemma}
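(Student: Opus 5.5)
The natural route is to identify Schmidt rank with matrix rank. Under the isomorphism $\bbC^m\otimes\bbC^n\cong M_{m\times n}(\bbC)$ given by $\sum_{i,j}c_{ij}\ket{i}\ket{j}\mapsto C=(c_{ij})$, the Schmidt rank of a bipartite vector equals the rank of its coefficient matrix; this is just the singular value decomposition of $C$. So I would write $\ket{x}=\sum_{i,j}X_{ij}\ket{i}_{A_1}\ket{j}_{B_1}$ and $\ket{y}=\sum_{k,l}Y_{kl}\ket{k}_{A_2}\ket{l}_{B_2}$, with $\operatorname{rank}X=a$ and $\operatorname{rank}Y=b$.

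Next I would regroup the factors as in the construction preceding Definition~\ref{def:distill}, so that $\ket{x}\otimes\ket{y}$ is viewed as a vector of $\mathcal{H}_{A_1A_2}\otimes\mathcal{H}_{B_1B_2}=\bbC^{m^2}\otimes\bbC^{n^2}$. Its expansion is
$$\ket{x}\otimes\ket{y}=\sum_{i,k,j,l}X_{ij}Y_{kl}\,\ket{i,k}_{A_1A_2}\ket{j,l}_{B_1B_2}.$$
Indexing rows by the pair $(i,k)$ and columns by $(j,l)$, the coefficient matrix is exactly the Kronecker product $X\otimes Y$.

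It then remains to invoke $\operatorname{rank}(X\otimes Y)=\operatorname{rank}X\cdot\operatorname{rank}Y$. For a self-contained argument, take SVDs $X=U_1\Sigma_1V_1^\dagger$ and $Y=U_2\Sigma_2V_2^\dagger$. Then
$$X\otimes Y=(U_1\otimes U_2)(\Sigma_1\otimes\Sigma_2)(V_1\otimes V_2)^\dagger,$$
where $U_1\otimes U_2$ and $V_1\otimes V_2$ are unitary and $\Sigma_1\otimes\Sigma_2$ is diagonal with exactly $ab$ nonzero entries $s_p t_q$. Equivalently, from Schmidt decompositions $\ket{x}=\sum_{p=1}^a s_p\ket{u_p}\ket{v_p}$ and $\ket{y}=\sum_{q=1}^b t_q\ket{u'_q}\ket{v'_q}$, the expression $\sum_{p,q}s_pt_q\ket{u_p,u'_q}\ket{v_p,v'_q}$ is already a Schmidt decomposition with $ab$ nonzero coefficients, because $\{\ket{u_p,u'_q}\}$ and $\{\ket{v_p,v'_q}\}$ are orthonormal sets. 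Either form gives the claim.

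The only step needing care is the regrouping: one must check that the coefficient matrix obtained after swapping the middle tensor factors is genuinely $X\otimes Y$, up to a consistent ordering of row and column indices, and not some reshuffled matrix. This is immediate from the displayed expansion, so I expect no real obstacle.
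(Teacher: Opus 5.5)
Your proof is correct. The paper itself gives no proof of this lemma; it quotes it as a standard fact. Your argument is the standard justification: Schmidt rank equals the rank of the coefficient matrix. After regrouping to $A_1A_2:B_1B_2$, the coefficient matrix of $\ket{x}\otimes\ket{y}$ is exactly the Kronecker product $X\otimes Y$, and $\operatorname{rank}(X\otimes Y)=ab$.

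One small imprecision: when $m\neq n$, the matrix $\Sigma_1\otimes\Sigma_2$ built from rectangular $\Sigma_i$ is not literally diagonal in the lexicographic ordering. The nonzero entry $s_pt_q$ sits at row $pm+q$ and column $pn+q$. It does have at most one nonzero entry per row and per column, so its rank is still $ab$. Using compact SVDs, or your second argument via the orthonormal families $\{\ket{u_p,u'_q}\}$ and $\{\ket{v_p,v'_q}\}$, avoids the issue entirely.

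Nothing in your argument uses that $\ket{x}$ and $\ket{y}$ lie in the same space. It therefore proves the slightly more general statement for $\bbC^{m_1}\otimes\bbC^{n_1}$ and $\bbC^{m_2}\otimes\bbC^{n_2}$.
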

\begin{lemma}
\label{le:hyperplane}
Every hyperplane (i.e., a $(D-1)$-dimensional subspace) in the $n$-partite Hilbert space $\bbC^{d_1}\otimes...\otimes \bbC^{d_n}$ with $D=\prod^n_{j=1} d_j$ is spanned by product vectors.
\end{lemma}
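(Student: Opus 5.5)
The plan is a dimension count followed by a linear-algebra spanning argument. Let $W\subset \bbC^{d_1}\otimes\cdots\otimes\bbC^{d_n}$ be a hyperplane. Since $W$ has codimension one, $W=\{\ket{v}: \braket{w|v}=0\}$ for some fixed nonzero vector $\ket{w}$. I would show directly that the product vectors lying in $W$ span $W$. Equivalently, the orthogonal complement (inside the whole space) of the span of all product vectors in $W$ is exactly $\lin\{\ket{w}\}$. So it suffices to prove the following: if a vector $\ket{u}$ is orthogonal to every product vector orthogonal to $\ket{w}$, then $\ket{u}$ is proportional to $\ket{w}$.

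The key reformulation uses multilinear forms. For product vectors $\ket{x_1}\otimes\cdots\otimes\ket{x_n}$, define
\[
f_w(\bar x_1,\ldots,\bar x_n)=\braket{w|x_1,\ldots,x_n}, \qquad f_u(\bar x_1,\ldots,\bar x_n)=\braket{u|x_1,\ldots,x_n}.
\]
Both are multilinear polynomials in the (conjugated) coordinates of the $x_j$. The hypothesis says that the zero set of $f_w$ on the product variety is contained in the zero set of $f_u$.

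The strategy is then to fix all factors except one and argue factor by factor. Fix generic $x_2,\ldots,x_n$. Then $f_w$ and $f_u$ become linear functionals $\ell_w,\ell_u$ on $\bbC^{d_1}$. Whenever $\ell_w\neq 0$, the condition $\ker\ell_w\subseteq\ker\ell_u$ forces $\ell_u=c(x_2,\ldots,x_n)\,\ell_w$. If $d_1=1$, I would simply drop that factor; this takes care of the degenerate case.

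Next I need to show that the scalar $c$ does not depend on $x_2,\ldots,x_n$. The cleanest route is to regard $f_u$ and $f_w$ as polynomials on the irreducible affine space $\bbC^{d_1}\times\cdots\times\bbC^{d_n}$. The polynomial $f_w$ is nonzero, and $f_u$ vanishes on $V(f_w)$. If $f_w$ is irreducible, the Nullstellensatz gives $f_u=g f_w$. Multidegree considerations, since both polynomials are multilinear of degree $(1,\ldots,1)$, then force $g$ to be a constant.

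I expect the main obstacle to be that $f_w$ may be reducible, for instance when $\ket{w}$ is itself a product vector. I would handle this directly. Suppose $f_w=\prod_k p_k$. Each irreducible factor $p_k$ is multilinear in a disjoint subset of the variable blocks, and each block appears in exactly one factor, because the multidegree is $(1,\ldots,1)$. The polynomial $f_u$ vanishes on each $V(p_k)$, so each $p_k$ divides $f_u$. The $p_k$ are pairwise coprime, since they involve disjoint variable blocks. Therefore their product $f_w$ divides $f_u$, and the degree argument again forces the quotient to be a constant.

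Once $f_u=c f_w$ is established, the products $x_1\otimes\cdots\otimes x_n$ span the whole space, so this identity gives $\ket{u}=\bar c\ket{w}$. This means the only vectors orthogonal to all product vectors in $W$ are multiples of $\ket{w}$, so those product vectors span $W^{\perp\perp}=W$, which completes the plan. As a fallback for the complex-conjugation bookkeeping, I would treat $\bar x_j$ as independent complex variables; this is harmless because $x_j\mapsto\bar x_j$ is a bijection of $\bbC^{d_j}$.
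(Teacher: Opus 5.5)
Your proof is correct. There is nothing in the paper to match it against: Lemma~\ref{le:hyperplane} is quoted in the preliminaries as a known fact, with no proof, and is used only in the proof of Proposition~\ref{pro:Ns}. Your argument is therefore a self-contained replacement for that citation, and it holds for any number of parties and any local dimensions.

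The whole content of your argument is the following chain:
\begin{itemize}
\item the multilinear form $f_w$ is squarefree, so $(f_w)$ is a radical ideal;
\item the Nullstellensatz then gives $f_w\mid f_u$;
\item the degree count forces $f_u=cf_w$, hence $\ket{u}\propto\ket{w}$;
\item so the span $S$ of the product vectors in $W$ satisfies $S^\perp=\lin\{\ket{w}\}$, and therefore $S=W$.
\end{itemize}
The citation buys brevity. Your route buys a complete, coordinate-free proof that treats all $n$ uniformly. The price is invoking the Nullstellensatz and unique factorization instead of constructing product vectors in $W$ explicitly.

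A few points to tighten.

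\emph{Drop the conjugations.} The expression $\langle w|x_1,\ldots,x_n\rangle=\sum_I \bar w_I (x_1)_{i_1}\cdots(x_n)_{i_n}$ is already a multilinear polynomial in the coordinates of the $x_j$. As a function of the $\bar x_j$ it is not a polynomial, so the notation $f_w(\bar x_1,\ldots,\bar x_n)$ works against you, and the ``fallback'' is unnecessary.

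\emph{Justify the factor structure.} The claim that the irreducible factors of $f_w$ are homogeneous of degree one in pairwise disjoint sets of blocks needs a line of proof. For each block $j$, the highest and the lowest degree in the block-$j$ variables are both additive under multiplication, and both equal $1$ for $f_w$. Hence exactly one factor is homogeneous of degree one in block $j$, and the others do not involve block $j$. This also shows that no factor is repeated and that distinct factors are non-associate, hence coprime. That is exactly what you need for $(f_w)$ to be radical.

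\emph{Streamline.} With the previous point in hand, the irreducible and reducible cases are a single argument. The paragraph fixing generic $x_2,\ldots,x_n$, together with the $d_1=1$ remark, plays no role in the final proof and can be deleted.
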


The following fact is from~\cite{song2026entanglement}. 
\begin{proposition}\label{pro:1-distill}
Let $\rho$ be a two-qutrit symmetric NPT state of rank five, with an orthonormal basis $\{|e_{1}\rangle ,\ldots ,|e_{5}\rangle \}$ for $\mathcal{R}(\rho)$ given explicitly by
\begin{equation}
\label{eq:rho} 
\begin{array}{l}
|e_{1}\rangle=\frac{|01\rangle+|10\rangle}{\sqrt{2}},\quad|e_{2}\rangle=\frac{|12\rangle+|21\rangle}{\sqrt{2}},\quad|e_{3}\rangle=\frac{|02\rangle+|20\rangle}{\sqrt{2}},\\
|e_{4}\rangle=\frac{|00\rangle-|11\rangle}{\sqrt{2}},\quad|e_{5}\rangle=\frac{|00\rangle+|11\rangle-2|22\rangle}{\sqrt{6}}.
\end{array}
\end{equation}
Suppose $\rho$ has the spectral decomposition $\rho = \sum_{j = 1}^{5}\lambda_{j}|e_{j}\rangle \langle e_{j}|$, where $\lambda_{5} = x$ and $\lambda_{1} = \lambda_{2} = \lambda_{3} = \lambda_{4} = \frac{1 - x}{4}$. Then $\rho$ is NPT if and only if $x\in (0,\frac{33 - 12\sqrt{6}}{25})\cup (\frac{3}{11},1)$, and $\rho$ is PPT when $x \in [\frac{33-12\sqrt{6}}{25},\frac{3}{11}]$. Moreover, $x\in (0,\frac{24\sqrt{2} - 33}{7})\cup (\frac{3}{11},1)$ is a sufficient condition for $\rho$ to be 1-distillable.
\end{proposition}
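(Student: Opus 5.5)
The statement has three parts: the NPT/PPT characterization, and the sufficient condition for 1-distillability. My plan is to compute $\rho^\Gamma$ explicitly and split it into invariant blocks.

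\textbf{Step 1: block structure of $\rho^\Gamma$.} With $a=\frac{1-x}{4}$, write
\[
\rho = a\sum_{j=1}^4 |e_j\rangle\langle e_j| + x|e_5\rangle\langle e_5| .
\]
I will take the partial transpose on $B$ term by term. Each off-diagonal term $|ij\rangle\langle kl|$ becomes $|il\rangle\langle kj|$. Then $\rho^\Gamma$ splits along two families of subspaces:
\begin{itemize}
\item[(i)] the three two-dimensional blocks $\operatorname{span}\{|01\rangle,|10\rangle\}$, $\operatorname{span}\{|02\rangle,|20\rangle\}$ and $\operatorname{span}\{|12\rangle,|21\rangle\}$;
\item[(ii)] the three-dimensional block $\operatorname{span}\{|00\rangle,|11\rangle,|22\rangle\}$, which receives the diagonal terms only.
\end{itemize}
Coherences such as $|00\rangle\langle 11|$ are mapped to $|01\rangle\langle 10|$, so they land in the $2\times2$ blocks. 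In each $2\times 2$ block the diagonal entries come from $|ii\rangle\langle ii|$-type contributions (weights $a/2$ and $x/6$ from $e_4,e_5$, or $4x/6$ for index $2$), and the off-diagonal entry comes from the corresponding $|ii\rangle\langle jj|$ coefficient. The $3\times 3$ block collects the $|ij\rangle\langle ij|$ terms with $i\neq j$ turned into $|ii\rangle\langle jj|$ from $e_1,e_2,e_3$, together with the diagonal weights.

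\textbf{Step 2: NPT region.} I will compute the eigenvalues of each block explicitly. The $2\times2$ blocks have determinants that are quadratic in $x$, and the $3\times3$ block factors because of its symmetry under swapping $0$ and $1$. Solving the sign conditions should give thresholds at $\frac{3}{11}$ and at a root of a quadratic, namely $\frac{33-12\sqrt6}{25}$. The irrational value suggests this threshold comes from the $3\times3$ block (its discriminant contains $\sqrt6$). Checking on which side each determinant is negative then yields the stated NPT and PPT intervals.

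\textbf{Step 3: 1-distillability.} By Definition~\ref{def:distill} with $n=1$, it suffices to exhibit a Schmidt-rank-two vector $|\psi\rangle$ with $\langle\psi|\rho^\Gamma|\psi\rangle<0$.

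\begin{itemize}
\item For $x>\frac{3}{11}$: the negative eigenvector should lie in a $2\times2$ block such as $\operatorname{span}\{|01\rangle,|10\rangle\}$. Any such vector has Schmidt rank at most two, so this case follows immediately.
\item For small $x$: the negative eigenvector lies in the $3\times3$ block $\sum c_i|ii\rangle$, which generally has Schmidt rank three. I will therefore restrict to $\alpha|00\rangle+\beta|22\rangle$ (or $|11\rangle,|22\rangle$), i.e. compress $\rho^\Gamma$ by the rank-two projector $P=|0\rangle\langle0|+|2\rangle\langle2|$ on $A$. I will then optimize $\langle\psi|\rho^\Gamma|\psi\rangle$ over $\alpha,\beta$; this amounts to a $2\times2$ principal minor, which is negative exactly when $x<\frac{24\sqrt2-33}{7}$. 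Other two-dimensional compressions mixing the blocks should also be checked, to confirm that this choice gives the stated threshold.
\end{itemize}

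\textbf{Main obstacle.} The hardest step is identifying the right rank-two compression in the small-$x$ regime and verifying that its minor produces exactly $\frac{24\sqrt2-33}{7}$. To find it, I would first try the compressions $P=|i\rangle\langle i|+|j\rangle\langle j|$ combined with the $B$-side restriction, and then general $P$ spanned by $|0\rangle,|1\rangle+t|2\rangle$. Because only sufficiency is claimed, any single working choice is enough.
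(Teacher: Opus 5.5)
The paper does not prove this proposition itself; it is quoted from \cite{song2026entanglement}, so there is no in-paper argument to compare against. Your plan is a correct self-contained proof. Its outputs agree with the data the paper does record: the spectrum of $\rho^\Gamma$ in Appendix~\ref{sec:spec}, and the minor $\det\alpha_1[1,6]$ of Appendix~\ref{app:alpha_1} at $a=0$, which is exactly your $\{|00\rangle,|22\rangle\}$ minor.

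Two slips should be fixed. In Step 1 you place the diagonal entries in the wrong blocks. The terms $|ii\rangle\langle ii|$ are fixed by the partial transpose and stay in the $3\times3$ block. The diagonal of each $2\times2$ block comes from the $|ij\rangle\langle ij|$ ($i\neq j$) terms of $e_1,e_2,e_3$. Also, what becomes $|ii\rangle\langle jj|$ is $|ij\rangle\langle ji|$, not $|ij\rangle\langle ij|$. The correct blocks, on $\{|01\rangle,|10\rangle\}$, on $\{|02\rangle,|20\rangle\}$ (identically on $\{|12\rangle,|21\rangle\}$), and on $\{|00\rangle,|11\rangle,|22\rangle\}$ respectively, are
\[
\begin{pmatrix} \frac{1-x}{8} & \frac{7x-3}{24}\\ \frac{7x-3}{24} & \frac{1-x}{8}\end{pmatrix},\qquad
\begin{pmatrix} \frac{1-x}{8} & -\frac{x}{3}\\ -\frac{x}{3} & \frac{1-x}{8}\end{pmatrix},\qquad
\begin{pmatrix} \frac{3+x}{24} & \frac{1-x}{8} & \frac{1-x}{8}\\ \frac{1-x}{8} & \frac{3+x}{24} & \frac{1-x}{8}\\ \frac{1-x}{8} & \frac{1-x}{8} & \frac{2x}{3}\end{pmatrix}.
\]
In Step 3, for $x>\frac{3}{11}$ the negative direction is not in $\operatorname{span}\{|01\rangle,|10\rangle\}$. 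That block has eigenvalues $\frac{x}{6}$ and $\frac{3-5x}{12}$, the second negative only for $x>\frac{3}{5}$. The negative direction is $(|02\rangle+|20\rangle)/\sqrt{2}$, with eigenvalue $\frac{3-11x}{24}$. It still has Schmidt rank two, so your argument survives.

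With these corrections, everything you claim checks out:
\begin{itemize}
\item The $3\times3$ block has eigenvalues $\frac{x}{6}$ and $\frac{3+7x\pm3\sqrt{11x^2-10x+3}}{24}$. The smaller one is negative iff $25x^2-66x+9>0$, i.e.\ $x<\frac{33-12\sqrt{6}}{25}$ on $(0,1)$.
\item The $\{|00\rangle,|22\rangle\}$ minor equals $\frac{3+x}{24}\cdot\frac{2x}{3}-\frac{(1-x)^2}{64}=\frac{7x^2+66x-9}{576}$, which is negative iff $x<\frac{24\sqrt{2}-33}{7}$. Since the diagonal entries are positive, this yields a Schmidt-rank-two vector $\alpha|00\rangle+\beta|22\rangle$ with negative expectation.
\end{itemize}
The extra search over other compressions is unnecessary for sufficiency. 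By Proposition~\ref{pro:1-undistill} it could not improve the threshold anyway, so the simple compression you chose is already sharp.
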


\begin{example}\label{ex:1/7}
The state $\rho$ given in Proposition~\ref{pro:1-distill} is 1-undistillable if $\lambda_5 = x = \frac{1}{7}\in [\frac{24\sqrt{2} - 33}{7},\frac{33 - 12\sqrt{6}}{25})$. 
\end{example}

The state $\rho$ in Proposition~\ref{pro:1-distill} is NPT for $\lambda_5 \in [\frac{24\sqrt{2}-33}{7},\frac{33-12\sqrt{6}}{25}) \approx [0.134,0.144)$, but whether $\rho$ is 1-distillable in this interval remains unclear. Motivated by Example~\ref{ex:1/7}, we conjecture that $\rho$ is 1-undistillable for $\lambda_5 \in [\frac{24\sqrt{2} - 33}{7},\frac{33 - 12\sqrt{6}}{25}) \approx [0.134,0.144)$.

\section{Result}
\label{sec:result}
In this section, we study the 1-distillability and 2-distillability of the rank-five NPT states $\rho$ from Proposition~\ref{pro:1-distill}. Proposition~\ref{pro:1-undistill} establishes 1-undistillability for these states when $\lambda_5$ lies in the interval $[\frac{24\sqrt{2}-33}{7},\frac{33-12\sqrt{6}}{25})$. For the same interval, we proceed to analyze 2-distillability. Proposition~\ref{pro:Ns} shows that any Schmidt-rank-two vector $\ket{\psi}$ with $\bra{\psi}\sigma\ket{\psi}<0$ cannot belong to the 17-dimensional subspace $\mathcal{N}(\sigma) \oplus \operatorname{span}\{\ket{a_9,a_9}\}$, where $\sigma=(\rho^\Gamma)^{\otimes 2}$. We then present the general form of such a Schmidt-rank-two state and derive a compact expression for its expectation value $\bra{\psi}\sigma\ket{\psi}$. This expression takes the quadratic form $\mathbf{d}^{\dagger}M\mathbf{d}$. To handle its complexity, we decompose this quadratic form into a sum of 16 lower-dimensional quadratic forms.
\begin{proposition}\label{pro:1-undistill}
The state $\rho$ in Proposition~\ref{pro:1-distill} is 1-undistillable for $\lambda_5 \in [\frac{24\sqrt{2}-33}{7},\frac{33-12\sqrt{6}}{25}) \approx [0.134,0.144)$.
\end{proposition}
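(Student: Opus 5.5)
We have to show that for every rank-two projection $P$ on $\mathcal{H}_A=\mathbb{C}^3$, the operator $(P\otimes I_B)\rho^{\Gamma}(P^{\dagger}\otimes I_B)$ is positive semidefinite. Equivalently, by Definition~\ref{def:distill}, $\langle\psi|\rho^\Gamma|\psi\rangle\ge 0$ for every Schmidt-rank-two $|\psi\rangle$. The first step is to write $\rho^\Gamma$ explicitly. With $\lambda_1=\dots=\lambda_4=\frac{1-x}{4}$ and $\lambda_5=x$, $\rho$ is supported on $|00\rangle,|11\rangle,|22\rangle$ and the symmetric off-diagonal pairs. After partial transposition, $\rho^\Gamma$ decomposes into:
\begin{itemize}
\item a block on $\mathrm{span}\{|01\rangle,|10\rangle\}$, and likewise for $\{02,20\}$ and $\{12,21\}$;
\item a $3\times3$ block on $\mathrm{span}\{|00\rangle,|11\rangle,|22\rangle\}$.
\end{itemize}
The negative eigenvalue responsible for NPT-ness comes from a $2\times2$ block such as $\{|01\rangle,|10\rangle\}$, whose off-diagonal entry $\frac{\lambda_4}{2}\cdot(\pm)+\frac{\lambda_5}{6}$-type coefficient exceeds the diagonal $\frac{\lambda_1}{2}$-type entries on the corresponding range of $x$.

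The second step is to parametrize $P$. The range of $P$ is a 2-dimensional subspace of $\mathbb{C}^3$, i.e.\ the orthogonal complement of a unit vector $|v\rangle=(v_0,v_1,v_2)$. I would use the symmetry of $\rho$ to reduce the parameters of $|v\rangle$. Local unitaries $U\otimes\bar U$ with $U$ diagonal phases, or permutation of $|0\rangle,|1\rangle$ (which preserves the basis up to signs), leave $\rho^\Gamma$ invariant up to unitary equivalence. So we may assume $v_0,v_1\ge 0$ and restrict $v_2$ to a real phase, or treat the two generic forms $v_2=0$ and $v_2\neq0$. With this I expect $(P\otimes I)\rho^\Gamma(P\otimes I)$, written in an orthonormal basis $\{|p_1\rangle,|p_2\rangle\}\otimes\{|0\rangle,|1\rangle,|2\rangle\}$, to reduce to two forms of $6\times6$ Hermitian matrices: one for the case where the kernel vector lies in $\mathrm{span}\{|0\rangle,|1\rangle\}$ or is a basis vector, and one for the generic case.

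The third step is to apply Lemma~\ref{lem:principal_minors}(ii) to each form. I would check that the first five leading principal minors are strictly positive for all admissible parameters and all $x$ in the interval, and that the determinant is $\ge0$. The determinant should factor as a product of a positive function of $|v_i|^2$ and a polynomial in $x$. The latter's sign changes exactly at $x=\frac{24\sqrt2-33}{7}$, the root of $7x^2+66x-9=0$ (or the analogous quadratic), which is where Proposition~\ref{pro:1-distill} stopped being able to find a negative direction. Example~\ref{ex:1/7} serves as a sanity check at $x=1/7$. Where a leading minor can vanish at boundary parameter values, I would fall back on part (i), or on continuity/limiting arguments, since the PSD cone is closed.

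The main obstacle is the generic form. There the determinant is a polynomial in $x$ and in the moduli and phases of $|v\rangle$, and proving its nonnegativity uniformly requires finding a good grouping. This could be a sum-of-squares decomposition, or a minimization over $|v_0|^2,|v_1|^2,|v_2|^2$ on the simplex in which the extreme case turns out to be, e.g., $|v\rangle\propto|0\rangle+|1\rangle$, reducing everything to a one-variable inequality in $x$. I would attempt that reduction first, guided by the worst-case direction used in the 1-distillability proof for $x<\frac{24\sqrt2-33}{7}$.
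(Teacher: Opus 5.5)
Your skeleton matches the paper's. It splits rank-two compressions by whether the kernel direction has $v_2=0$ or $v_2\neq0$; in the paper these are the echelon forms $P_1$ with rows $(1,a,0),(0,0,1)$ and $P_2$ with rows $(1,0,b),(0,1,c)$. It then tests the resulting $6\times6$ matrices with Lemma~\ref{lem:principal_minors}. However, the whole content of the proof is showing that these minors are nonnegative for \emph{all complex} $a,b,c$. Your proposal does not supply this, and the shortcut you propose for it is invalid.

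\textbf{The symmetry reduction is false.} $\rho^\Gamma$ is not invariant under $U\otimes\bar U$ with $U=\mathrm{diag}(e^{i\theta_0},e^{i\theta_1},e^{i\theta_2})$: the entry $\langle 01|\rho^\Gamma|10\rangle=\frac{7x-3}{24}$ picks up the factor $e^{2i(\theta_0-\theta_1)}$, and similarly for $\langle 02|\rho^\Gamma|20\rangle=-\frac{x}{3}$. In fact the only diagonal local symmetries of $\rho$ are sign flips. The remaining symmetries (the swap $|0\rangle\leftrightarrow|1\rangle$, real rotations of that plane, complex conjugation) form a one-parameter family. They act on a four-real-parameter space of kernel directions, so you cannot arrange $v_0,v_1\ge0$, and the phases genuinely matter. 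For $P_1$, the principal block $\alpha_1[1,2,6]$ (the matrix of $\rho^\Gamma$ on $|00\rangle+\bar a|10\rangle$, $|01\rangle+\bar a|11\rangle$, $|22\rangle$) has determinant equal to a positive multiple of
\[
(7x^2+66x-9)\,(1+|a|^2)^2+4(3-7x)(3-19x)\,(\mathrm{Im}\,a)^2 ,
\]
which is the paper's (A6) rewritten. So at $x=\frac{24\sqrt2-33}{7}$, the kernel vector $|0\rangle+|1\rangle$ ($a=-1$) gives a singular compression, while $|0\rangle+i|1\rangle$ ($a=i$), with identical moduli, gives a nonsingular one. A minimization over $|v_0|^2,|v_1|^2,|v_2|^2$ therefore cannot decide positivity.

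\textbf{The generic case is the missing main step.} You are right that $x_0$ is the root of $7x^2+66x-9$. Your expected factorization (a positive function of the moduli times a polynomial in $x$ vanishing at $x_0$) roughly holds for $P_1$, as the display shows, but not for $P_2$. There the paper's minors $D_5$ and $D_6$ are genuinely multivariate in $b_1,b_2,c_1,c_2$. They contain terms whose sign is not controlled on the interval: the cross terms $b_1b_2c_1c_2$, $b_1^3b_2c_1c_2,\dots$, and the monomials $b_1^2c_2^2$, $b_2^2c_1^2$ with negative coefficients. The paper handles these by explicit square completions such as $(b_1^2-c_2^2)^2$, $(b_1b_2+c_1c_2)^2$, $(b_1c_1+b_2c_2)^2$ and $|b|^2(b_1c_2-b_2c_1)^2$, with a case split on the sign of $b_1b_2c_1c_2$. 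Guessing an extremal direction is no substitute for such a uniform inequality.

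\textbf{Your account of the NPT-ness is wrong.} The $2\times2$ blocks of $\rho^\Gamma$ on $\{|01\rangle,|10\rangle\}$, $\{|02\rangle,|20\rangle\}$, $\{|12\rangle,|21\rangle\}$ have eigenvalues $\frac{x}{6},\frac{3-5x}{12},\frac{3+5x}{24},\frac{3-11x}{24}$, all positive for $x<\frac{3}{11}$. The negative eigenvalue $\lambda_9$ comes from the $3\times3$ block on $\mathrm{span}\{|00\rangle,|11\rangle,|22\rangle\}$, whose eigenvector $|a_9\rangle$ has Schmidt rank three. If the negativity sat in a $2\times2$ block, the negative eigenvector would have Schmidt rank two and $\rho$ would be 1-distillable, contradicting the statement you set out to prove.
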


\begin{proof}
The projection $P$ on subsystem $A$ can be restricted to one of the two forms
\begin{equation}
\begin{array}{c}
     P_1 = \begin{pmatrix}
         1 & a & 0 \\
         0 & 0 & 1 \\
         0 & 0 & 0
     \end{pmatrix},\qquad
     P_2 = \begin{pmatrix}
         1 & 0 & b \\
         0 & 1 & c \\
         0 & 0 & 0
     \end{pmatrix}
\end{array}
\end{equation}
where $a,b,c \in \mathbb{C}$ are parameters to be chosen such that the projected matrix is not positive semidefinite. 
We define the resulting matrices
\begin{equation}
\label{eq:alpha}
    \alpha_i := (P_i \otimes I_3)\rho^{\Gamma}(P_i^{\dagger} \otimes I_3),\qquad i=1,2
\end{equation}

For $\alpha_1$, we apply Lemma~\ref{lem:principal_minors} (i) and examine all its principal minors. Detailed calculations are provided in Appendix~\ref{app:alpha_1}. All first-order principal minors are nonnegative. Given that some second-order principal minors factor as products of first-order principal minors, we only need to consider those with non-zero off-diagonal entries. Similarly, since some third-order principal minors are the product of a second-order principal minor and a first-order principal minor, we only consider the case where all off-diagonal entries are nonzero. All principal minors of order four or higher are products of lower-order principal minors. It is easy to show that all principal minors are nonnegative. Thus, $\alpha_1$ is positive semidefinite for all $a$.

For $\alpha_2$, we apply Lemma~\ref{lem:principal_minors} (ii) and examine its leading principal minors. See Appendix~\ref{app:alpha_2} for the detailed calculation. Through calculation for $x$ and by completing the square for some terms, we find that the first six leading principal minors are positive, while all higher-order leading principal minors are zero. Therefore, $\alpha_2$ is positive semidefinite for all $b,c$. Consequently, no choice of $a,b,c$ can render the projected matrix indefinite, which implies that $\rho$ is $1$-undistillable. 
\end{proof}
Next, we study the two-copy distillability of the state $\rho$ in Proposition~\ref{pro:1-distill}. Let the spectral decomposition be 
\begin{eqnarray}
\label{eq:rho^Gamma}
\rho^\Gamma=
\sum^9_{j=1}\lambda_j \proj{a_j},
\end{eqnarray}
where $\lambda_1\ge...\ge\lambda_8>0>\lambda_9$ and $\ket{a_9}$ has Schmidt rank three. The explicit expression is given in Appendix~\ref{sec:spec}. We have
\begin{eqnarray}
&&
\sigma:=(\rho^\Gamma)^{\otimes2}=  
\sum^9_{j=1}\lambda_j \ket{a_j}\bra{a_j}
\otimes
\sum^9_{k=1}\lambda_k \ket{a_k}\bra{a_k}
\notag\\
&& \in B(H_{A_1B_1}\otimes H_{A_2B_2})
:=B(H_{A_1A_2:B_1B_2})\cong B(\bbC^9\otimes \bbC^9).
\notag\\
\label{eq:sigma}
\end{eqnarray}
Hence $\sigma$ is a $9\times9$ Hermitian matrix of inertia $(16,0,65)$. The negative subspace $\cN(\sigma)$ of $\sigma$, being a $16$-dimensional subspace of $\bbC^9\otimes \bbC^9$, is spanned by $\ket{a_j,a_9}$ and $\ket{a_9,a_j}$ for $j=1,...,8$. Then $\rho$ is 2-distillable if there is a bipartite state of Schmidt rank two 
\begin{eqnarray}
\label{eq:ps in C9 otimes C9}    
\ket{\ps}\in H_{A_1A_2}\otimes H_{B_1B_2}= \bbC^9\otimes \bbC^9,
\end{eqnarray}
such that $\bra{\ps} \sigma \ket{\ps}<0$.

\begin{proposition}\label{pro:Ns}
Let $\rho$ be the 1-undistillable two-qutrit state in Proposition~\ref{pro:1-distill}, and let $\sigma$ be the corresponding operator in \eqref{eq:sigma}. 
If there exists a vector $\ket{\ps}$ of Schmidt rank two such that $\bra{\ps}\sigma\ket{\ps}<0$, then $\ket{\ps}$ cannot lie entirely in the $17$-dimensional bipartite subspace $\cN(\s) \oplus \lin\{\ket{a_9,a_9}\}\subset\bbC^9\otimes\bbC^9$.
\end{proposition}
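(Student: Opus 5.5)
We argue by contradiction. Suppose $\ket{\ps}\in\cN(\s)\oplus\lin\{\ket{a_9,a_9}\}$ has Schmidt rank two and $\bra{\ps}\s\ket{\ps}<0$. We expand it as
\[
\ket{\ps}=\ket{a_9}\otimes\ket{u}+\ket{v}\otimes\ket{a_9}+c\ket{a_9,a_9},
\]
where $\ket{u},\ket{v}\in\lin\{\ket{a_1},\dots,\ket{a_8}\}$ and $c\in\bbC$. Absorbing $c\ket{a_9}$ into one factor, we can write $\ket{\ps}=\ket{a_9}\otimes\ket{u'}+\ket{v}\otimes\ket{a_9}$ with $\ket{u'}=\ket{u}+c\ket{a_9}$.

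Because $\s$ is diagonal in the product basis $\{\ket{a_j,a_k}\}$, the expectation value separates as
\[
\bra{\ps}\s\ket{\ps}=\lambda_9\Big(\sum_{j\le 8}\lambda_j(|u_j|^2+|v_j|^2)\Big)+\lambda_9^2|c|^2 .
\]
Here the cross term between $\ket{a_9}\otimes\ket{u}$ and $\ket{v}\otimes\ket{a_9}$ vanishes, since it only involves components with $j,k\le 8$ and $j=9$ simultaneously. Negativity therefore only requires $(u,v)\neq 0$ with $|c|$ not too large. The whole content of the proposition is thus the Schmidt-rank constraint in the cut $A_1A_2:B_1B_2$. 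We view $\ket{\ps}$ as a $9\times 9$ matrix $\Psi$ (rows indexed by $A_1A_2$, columns by $B_1B_2$). Writing each $\ket{a_j}\in\bbC^3\otimes\bbC^3$ as a $3\times 3$ matrix $A_j$, the matrix of $\ket{x}\otimes\ket{y}$ is $A_x\otimes A_y$ (a Kronecker product), so
\[
\Psi=A_9\otimes U'+V\otimes A_9 ,
\]
with $U'=\sum_{j\le 8}u_jA_j+cA_9$ and $V=\sum_{j\le 8}v_jA_j$.

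The key step is to show that $\operatorname{rank}\Psi\neq 2$ for every such choice with $(u,v)\neq 0$. Since $\ket{a_9}$ has Schmidt rank three, $A_9$ is invertible. Multiplying on the left by $A_9^{-1}\otimes I$ and on the right by $I\otimes A_9^{-1}$ (local invertible operations preserve rank) gives
\[
\Psi\sim I\otimes X+Y\otimes I,\qquad X=U'A_9^{-1},\ Y=A_9^{-1}V .
\]
By Lemma~\ref{le:srxsr}, the term $I\otimes X$ alone has rank $3\operatorname{rank}X$. More generally, $I\otimes X+Y\otimes I$ is the Kronecker sum, whose eigenvalues are $\mu_i(Y)+\nu_k(X)$ and whose rank can be read off from the Jordan structures. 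This is the Sylvester operator $Z\mapsto YZ+ZX^T$, and its rank equals $9-\dim\ker$. The kernel is the solution space of the Sylvester equation, whose dimension is a sum over pairs of Jordan blocks with $\mu_i=-\nu_k$ of $\min(\text{sizes})$. For rank $2$ we need a kernel of dimension $7$. A case analysis on the Jordan forms of $X$ and $Y$ shows this is only possible when, say, $X=\alpha I+N$ and $Y=-\alpha I$ with $N$ a rank-two nilpotent, or something similar; in every such case one of $X,Y$ must be a scalar multiple of the identity. For example, if $Y=-\alpha I$ then $V=-\alpha A_9$.

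Finally, we use the constraint that $V$ lies in $\lin\{A_1,\dots,A_8\}$, and likewise $U'-cA_9$ lies in that span. Since the $\ket{a_j}$ are orthonormal, $V=-\alpha A_9$ forces $\alpha=0$, hence $V=0$. Then $X$ must itself be nilpotent with $3\operatorname{rank}X=2$, which is impossible, or $X=0$, contradicting $(u,v)\neq 0$. The symmetric case is handled in the same way, with the parameter $c$ entering through the scalar part of $X$. The main obstacle is making the Jordan-form case analysis airtight, in particular excluding rank $2$ when $X$ and $Y$ are both non-scalar but have eigenvalue coincidences. I would first try to exploit the explicit form of $A_9$ from Appendix~\ref{sec:spec}, together with the symmetry of the $A_j$ under the swap $A\leftrightarrow B$ inherited from $\rho$, to restrict the possible $X$ and $Y$ before running the case analysis.
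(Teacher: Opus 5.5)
Your reduction is correct: in the cut $A_1A_2:B_1B_2$ the coefficient matrix is $\Psi=A_9\otimes U'+V\otimes A_9$. Since $A_9$ is invertible, $\Psi$ is locally equivalent to $I\otimes X+Y\otimes I$, whose kernel is that of the Sylvester map $Z\mapsto YZ+ZX^{T}$. The gap is the step after that. You do not carry out the Jordan-form case analysis, and the conclusion you sketch for it is wrong. With $Y=-\alpha I$ and $X=\alpha I+N$, where $N$ is a rank-two nilpotent $3\times 3$ matrix, you get $I\otimes X+Y\otimes I=I\otimes N$, which has rank $6$, not $2$. So the configuration you call the only possibility for rank two does not give rank two. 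Nothing in the proposal actually rules out rank two when $X$ and $Y$ are both non-scalar, which is the obstacle you flag yourself. The later steps (the span constraint, the role of $c$) rest on this unproved claim.

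The repair is short, and it makes those later steps unnecessary.

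\emph{One of $X,Y$ is scalar.} Say $Y=\beta I$. Then $I\otimes X+Y\otimes I=I\otimes(X+\beta I)$, which has rank $3\operatorname{rank}(X+\beta I)\in\{0,3,6,9\}$. The case where $X$ is scalar is symmetric.

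\emph{Neither is scalar.} Use the kernel-dimension formula: sum, over pairs of Jordan blocks of $Y$ and $-X$ with equal eigenvalue, the minimum of the two block sizes. For $3\times 3$ matrices this sum is at most $5$. The extremal cases are $Y$ and $-X$ both of Jordan type $(2,1)$ with the same eigenvalue, or both diagonalizable with eigenvalue pattern $(\mu,\mu,\mu')$.

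Hence $\operatorname{rank}\Psi\in\{0,3,4,\dots,9\}$. No nonzero vector of the $17$-dimensional subspace has Schmidt rank one or two, whatever $U'$ and $V$ are, so the orthogonality to $\ket{a_9}$ and the negativity hypothesis are never used.

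The paper reaches the same conclusion without Jordan forms. It writes $\ket{\ps}=\ket{\alpha}\otimes\ket{a_9}+\ket{a_9}\otimes\ket{\beta}$. If $\ket{\alpha}=0$, Lemma~\ref{le:srxsr} forces the Schmidt rank to be a multiple of $3$. Otherwise, Lemma~\ref{le:hyperplane} gives a product vector $\ket{y}\in H_{A_1B_1}$ orthogonal to $\ket{a_9}$ but not to $\ket{\alpha}$. Then $\langle y|\ps\rangle\propto\ket{a_9}$, and this contraction would have Schmidt rank at most two, contradicting that $\ket{a_9}$ has Schmidt rank three.
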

\begin{proof}
    Suppose $\ket{\ps}=\sum^8_{j=1} \alpha_j\ket{a_j,a_9}+\sum^8_{j=1} \beta_j\ket{a_9,a_j}$
    for complex numbers $\alpha_j,\beta_j$. Because $\ket{\ps}$ has Schmidt rank two, by Lemma \ref{le:srxsr} and \eqref{eq:sigma} we see that one of $\alpha_j$'s is nonzero, and one of $\beta_j$'s is also nonzero. Using Lemma \ref{le:hyperplane}, one can find a product vector $\ket{y}\in H_{A_1B_1}$ orthogonal to $\ket{a_9}$, and non-orthogonal to $\sum^8_{j=1}\alpha_j\ket{a_j}$ at the same time. Hence $\bra{y}\ket{\ps}\propto\ket{a_9} \in H_{A_2B_2}$ is a bipartite vector of Schmidt rank at most two. This contradicts the fact stated below Eq.~\eqref{eq:rho^Gamma} that $\ket{a_9}$ has Schmidt rank three. By a similar argument, one can show that this $17$-dimensional subspace actually contains no Schmidt-rank-two vector at all. 
\end{proof}
The above observation implies that the existence of $\ket{\psi}$ in Eq.~\eqref{eq:ps in C9 otimes C9} is tied to the positive subspace $\mathcal{P}(\sigma)$ of $\sigma$. This subspace is 65-dimensional in $\mathbb{C}^9\otimes \mathbb{C}^9$ and is spanned by $\ket{a_j,a_k}$ for $j,k=1,\dots,8$ and $\ket{a_9,a_9}$.

We now turn to a numerical investigation of 2-distillability for $\rho$. Specifically, we construct explicit expansions of $\ket{\psi}$ and $\sigma$ using the orthonormal basis $\{\ket{a_i}\}_{i=1}^9$ and the spectral decomposition of $\rho^\Gamma$ given in Appendix~\ref{sec:spec}. For the two-qutrit state $\rho$ in Proposition~\ref{pro:1-distill}, $\sigma$ is a linear combination of symmetric and antisymmetric states with counts 45 and 36, respectively. According to Definition~\ref{def:distill}, $\rho$ is 2-distillable if there exists an entangled state $\ket{\psi}$ of Schmidt rank two such that $\bra{\psi}\sigma\ket{\psi}<0$. We therefore study the Schmidt rank of $\ket{\psi}$ with respect to the bipartition $A_1A_2 : B_1B_2$ and derive a compact expression for $\bra{\psi}\sigma\ket{\psi}$ in terms of the expansion coefficients and the eigenvalues of $\rho^\Gamma$. The general form of a Schmidt-rank-two (SR-2) state is given by 
\begin{align}
\label{eq:psi-SR2}
    \ket{\psi} = \ket{x_0}_{A_1A_2}\otimes\ket{x_1}_{B_1B_2} + \ket{x_2}_{A_1A_2}\otimes\ket{x_3}_{B_1B_2},
\end{align}
where the constituent states satisfy $\langle x_0|x_2\rangle = \langle x_1|x_3\rangle = 0$.
Expanding each $\ket{x_i}$ in the orthonormal computational basis $\{\ket{mn}\}_{m,n=0}^{2}$ for the two-qutrit subsystems, we define the complex coefficients
$d_{i,(m,n)} = \langle mn|x_i\rangle$, so that
\begin{align}
\label{eq:xi}
    \ket{x_i} = \sum_{m,n=0}^{2} d_{i,(m,n)}\ket{mn}, \qquad i=0,1,2,3.
\end{align}
The orthogonality conditions then become
\begin{align}
\label{eq:SR2condition}
    \sum_{m,n=0}^{2} d_{0,(m,n)}^*d_{2,(m,n)} = \sum_{m,n=0}^{2} d_{1,(m,n)}^*d_{3,(m,n)} = 0.
\end{align}
We introduce a $9\times 9$ matrix $D$ with entries
\begin{align}
    D_{(m,p),(n,q)} &= d_{0,(m,p)}d_{1,(n,q)} + d_{2,(m,p)}d_{3,(n,q)}.
\end{align}
Its vectorization is denoted by
$\mathbf{d} = \operatorname{vec}(D) \in \mathbb{C}^{81}$. 
This vector serves as the primary coefficient representation of the state in the two bases used below.

For the quantity $\bra{\psi}\sigma\ket{\psi}$, we start from its quadratic-form representation $\bra{\psi}\sigma\ket{\psi}=\mathbf{d}^{\dagger}M\mathbf{d}$, with $M$ derived in Appendix~\ref{app:M}. 
We then decompose this quadratic form into a sum of 16 lower-dimensional quadratic forms. The index sets $T_1,\dots,T_{16}$ are defined in Appendix~\ref{app:Mi}. For each $i=1,\dots,16$, let $\mathbf{v}_i$ be the subvector of $\mathbf{d}$ consisting of the entries whose indices lie in $T_i$, and let $M_i$ be the principal submatrix of $M$ indexed by $T_i$. The explicit expression of $M_i$ is given in Appendix~\ref{app:Mi}. The original quadratic form then decomposes as
\begin{eqnarray}
    \bra{\psi}\sigma\ket{\psi} = \sum_{i=1}^{16} \mathbf{v}_i^\dagger M_i \mathbf{v}_i .
\end{eqnarray}
The goal is to show the non-negativity of the above expression under the orthogonality condition in Eq.~\eqref{eq:SR2condition}. First, finding a constructive method for $\psi$ that satisfies this orthogonality condition is nontrivial. Moreover, the quadratic-form representation $\bra{\ps}\sigma\ket{\ps}=\mathbf{d}^{\dagger}M\mathbf{d}$ is too complicated to allow a direct verification of non-negativity for all SR-2 states. 

\section{Conclusions}
\label{sec:conclusion}
We have completely resolved the 1-distillability of the rank-five symmetric two-qutrit NPT entangled family, proving that the previously open interval is 1-undistillable via the principal minors criterion. 
Over the same interval, we have also characterized a structural obstruction to 2-distillability, showing that no Schmidt-rank-two vector with a negative expectation value lies in this 17-dimensional subspace. 
Additionally, we have carried out explicit numerical investigations to further explore the 2-distillability of $\rho$.
Key contributions of this work include the complete classification of 1-distillability, a narrowed search space for 2-distillability, and a demonstration of the utility of linear algebraic positivity tests for high-dimensional entanglement distillation. 
A natural next step is to resolve the 2-distillability problem for this rank-five state. Another interesting direction is to construct more two-qutrit symmetric states and investigate their distillability.
\section*{Acknowledgments}
\addcontentsline{toc}{section}{Acknowledgments}
The first three authors were supported by the NNSF of China(Grant No. 12471427).

\addcontentsline{toc}{section}{References}
\bibliographystyle{unsrt}     
\phantomsection 
\bibliography{Result/main}

\newpage

\appendix
\renewcommand{\theequation}{\thesection-\arabic{equation}}
\section{Detailed calculations of principal minors for \texorpdfstring{$\alpha_1$}{alpha1}} 
\label{app:alpha_1}
\setcounter{equation}{0}   

In the proof of Proposition~\ref{pro:1-undistill}, we claimed that all principal minors of $\alpha_1$ are nonnegative. Here we present the explicit computations. 
After projecting, we obtain a $9 \times 9$ matrix $\alpha_1$ in Eq.~\eqref{eq:alpha_1}. 
The 1-undistillability of $\rho$ can be judged by testing the positive semidefiniteness of $\alpha_1$ and $\alpha_2$. 

Due to the presence of zero entries, we only present the submatrix containing the nonzero elements, as shown in Eq.~\eqref{eq:alpha_1}.
The subsequent computation of principal minors of various orders is also carried out based solely on Eq.~\eqref{eq:alpha_1}.
\begin{equation}
\label{eq:alpha_1}
\renewcommand\arraystretch{1.2}
\begin{pmatrix}
\frac{3+x}{24} + \frac{1-x}{8}|a|^2 & \frac{7x-3}{24}a + \frac{1-x}{8}\bar{a} & 0 & 0 & 0 & \frac{1-x}{8} \\[4pt]
\frac{7x-3}{24}\bar{a} + \frac{1-x}{8}a & \frac{1-x}{8} + \frac{3+x}{24}|a|^2 & 0 & 0 & 0 & \frac{1-x}{8}a \\[4pt]
0 & 0 & \frac{1-x}{8} + \frac{1-x}{8}|a|^2 & -\frac{x}{3} & -\frac{x}{3}a & 0 \\[4pt]
0 & 0 & -\frac{x}{3} & \frac{1-x}{8} & 0 & 0 \\[4pt]
0 & 0 & -\frac{x}{3}\bar{a} & 0 & \frac{1-x}{8} & 0 \\[4pt]
\frac{1-x}{8} & \frac{1-x}{8}\bar{a} & 0 & 0 & 0 & \frac{2x}{3}
\end{pmatrix}.
\end{equation} 

The following are the second-order principal minors with nonzero off-diagonal entries.
\begin{align}
\det \alpha_1[1,2] &= \frac{1}{24^2}\bigl[3(3+x)(1-x)|a|^4 + 48x(1-x)|a|^2 \nonumber\\
&\qquad + 3(3-7x)(1-x)(a^2 + \bar{a}^2) + 3(3+x)(1-x)\bigr], \tag{A1} \\ 
\det \alpha_1[1,6] &= \frac{1}{24^2}\bigl[48x(1-x)|a|^2 + 16x(3+x) - 9(1-x)^2\bigr], \tag{A2} \\
\det \alpha_1[2,6] &= \frac{1}{24^2}\bigl[(7x^2+66x-9)|a|^2 + 48x(1-x)\bigr], \tag{A3} \\
\det \alpha_1[3,4] &= \frac{1}{24^2}\bigl[9(1-x)^2|a|^2 + (3-7x)(3+x)\bigr], \tag{A4} \\
\det \alpha_1[3,5] &= \frac{1}{24^2}\bigl[(-55x^2-18x+9)|a|^2 + 9(1-x)^2\bigr]. \tag{A5}
\end{align} 

The third-order principal minors with nonzero off-diagonal entries are as follows.
\begin{align}
\det\alpha_1[1,2,6] &= \frac{1}{24^3}\bigl[(7x^2+66x-9)|a|^4 + 8x(35x-3)|a|^2 \nonumber\\
&\qquad + (7x-3)(3-19x)(a^2+\bar{a}^2) + 7x^2+66x-9\bigr], \tag{A6} \\
\det\alpha_1[3,4,5] &= \frac{3}{24^3}(x-1)(11x-3)(5x+3)(1+|a|^2). \tag{A7}
\end{align} 

All principal minors of order four or higher can be expressed as products of lower-order principal minors. Consequently, all principal minors are nonnegative.
\section{Detailed calculation of leading principal minors for \texorpdfstring{$\alpha_2$}{alpha2}}
\label{app:alpha_2}
\setcounter{equation}{0}   
In the proof of Proposition~\ref{pro:1-undistill}, we claimed that all leading principal minors of $\alpha_2$ are nonnegative. 
Here we present the explicit computations. 
After projecting, we obtain a $9 \times 9$ matrix $\alpha_2$. Due to the presence of zero entries, we only present the submatrix containing the nonzero elements, as shown in Eq.~\eqref{eq:alpha_2}. 
All subsequent calculations of principal minors are based solely on this reduced form.  
Let $b = b_1 + b_2i$ and $c = c_1 + c_2i$, and let $D_k$ denote the $k$-th leading principal minor of $\alpha_2$. Notably, all leading principal minors of $\alpha_2$ of order greater than six are zero. 
Accordingly, we present the analysis of the first four leading principal minors in Appendix~\ref{app:alpha2.D1-4}, the fifth-order leading principal minors in Appendix~\ref{app:alpha_2.D5}, and the sixth-order leading principal minors in Appendix~\ref{app:alpha_2.D6}.

\begin{equation}
\label{eq:alpha_2}
\renewcommand\arraystretch{2.0} 
\setlength\arraycolsep{3pt}     
\resizebox{0.9\linewidth}{!}   
{$
\begin{pmatrix}
\frac{3+x}{24} + \frac{1-x}{8}|b|^2 & 0 & -\frac{x}{3}b + \frac{1-x}{8}\bar{b} & \frac{1-x}{8}b\bar{c} & \frac{1-x}{8} & \frac{1-x}{8}\bar{c} \\
0 & \frac{1-x}{8}(1+|b|^2) & 0 & \frac{7x-3}{24} & \frac{1-x}{8}b\bar{c} & -\frac{x}{3}b \\
-\frac{x}{3}\bar{b} + \frac{1-x}{8}b & 0 & \frac{1-x}{8} + \frac{2x}{3}|b|^2 & -\frac{x}{3}\bar{c} & \frac{1-x}{8}b & \frac{2x}{3}b\bar{c} \\
\frac{1-x}{8}\bar{b}c & \frac{7x-3}{24} & -\frac{x}{3}c & \frac{1-x}{8} (1+|c|^2) & 0 & 0 \\
\frac{1-x}{8} & \frac{1-x}{8}\bar{b}c & \frac{1-x}{8}\bar{b} & 0 & \frac{3+x}{24}+\frac{1-x}{8}|c|^2 & -\frac{x}{3}c+\frac{1-x}{8}\bar{c} \\
\frac{1-x}{8}c & -\frac{x}{3}\bar{b} & \frac{2x}{3}\bar{b}c & 0 & -\frac{x}{3}\bar{c}+\frac{1-x}{8}c & \frac{1-x}{8}+\frac{2x}{3}|c|^2
\end{pmatrix}.$} 
\end{equation}

\subsection{Analysis of the first four leading principal minors}
\label{app:alpha2.D1-4}
According to Eq.~\eqref{eq:alpha_2}, the following are the first two leading principal minors. 
\begin{align}
D_1 &= \frac{1}{24}\bigl[3+x + 3(1-x)|b|^2\bigr], \\ 
D_2 &= \frac{3}{24^2}\bigl[3+x + 3(1-x)|b|^2\bigr](1-x)(1+|b|^2). 
\end{align}

It is easy to verify that for $x \in \left[\frac{24\sqrt{2}-33}{7},\ \frac{33-12\sqrt{6}}{25}\right]$, the first- and second-order leading principal minors are all positive. 
After substituting $b = b_1 + b_2i$ and $c = c_1 + c_2i$ into the third- and fourth-order leading principal minors and sorting the resulting terms in descending order first by total power, then by power of $b$, then by power of $b_1$, then by power of $c$, and finally by power of $c_1$, we obtain 
\begin{align}
    D_3 
    &=\frac{1}{24^3}\biggl[144x(x-1)^2|b|^6 + 144x(x-1)^2|b|^4 + 288x(x-1)^2|b|^2b_1^2  \nonumber \\
    & \quad + 27(x-1)^2(11x+1)b_1^2 + 9(x-1)^2(x+3)b_2^2 + 9(x-1)^2(x+3)\biggr], \\ 
    D_4 
    &= \frac{1}{24^4}\biggl[432x(1-x)^3|b|^6 + 9(x-1)^2(7x^2+66x-9)|b|^4|c|^2 \nonumber \\
    & \quad +96x(x - 1)(11x^2 + 6x - 9)|b|^2b_1^2 + 384x^2(x - 1)(5x - 3)|b|^2b_2^2 \nonumber \\
    & \quad + 12x(x-1)(19x^2 + 90x -45)|b|^2|c|^2 \nonumber \\
    & \quad + 3(x - 1)(1271x^3 - 777x^2 + 45x - 27)b_1^2 -27(x - 1)^3(x + 3)b_2^2 \nonumber\\
    & \quad + 31(x-1)(x+3)(5x+3)(11x-3)|c|^2 \nonumber \\
    & \quad + 24x(x-1)(5x-3)(x+3)\biggr]. 
\end{align}
From the above mathematical expressions, since the coefficients of the parameterized terms are nonnegative on the interval $\left[ \frac{24\sqrt{2} - 33}{7}, \frac{33 - 12\sqrt{6}}{25} \right]$ and the constant term is positive, the third- and fourth-order leading principal minors are positive. 
\subsection{Analysis of the fifth-order leading principal minor}
\label{app:alpha_2.D5}
Substitute $b = b_1 + b_2i$ and $c = c_1 + c_2i$ into the fifth-order leading principal minors, and sort the resulting terms in descending order first by total power, then by power of $b$, then by power of $b_1$, then by power of $c$, and finally by power of $c_1$, as shown in Eq.~\eqref{eq:D5}. 
For the fifth-order leading principal minor $D_5$, since some terms become negative for $x \in \left[ \frac{24\sqrt{2} - 33}{7}, \frac{33 - 12\sqrt{6}}{25} \right]$, it is necessary to complete the squares for certain terms, along with a case analysis based on the values of $b$ and $c$. 
\begin{align}\label{eq:D5}
    D_5 &= \frac{1}{24^5}[-27(x - 1)^3(7x^2 + 66x - 9)(b_1^6+b_2^6) \nonumber \\
    & \quad -81(x - 1)^3(7x^2 + 66x - 9)(b_1^4b_2^2+b_1^2b_2^4) \nonumber \\
    & \quad -54(x - 1)^3(7x^2 + 66x - 9)(b_1^4c_1^2+b_2^4c_2^2) \nonumber \\
    & \quad -162(x - 1)^3(91x^2 - 30x + 3)(b_1^4c_2^2+b_2^4c_1^2) \nonumber \\
    & \quad -432x(x - 1)^3(35x - 3)(b_1^2b_2^2c_1^2+b_1^2b_2^2c_2^2) \nonumber \\
    & \quad + 216(x - 1)^3(7x - 3)(19x - 3)(b_1^3b_2c_1c_2 + b_1b_2^3c_1c_2) \nonumber \\
    & \quad -27(x - 1)^3(7x^2 + 66x - 9)(b_1^2c_1^4+b_1^2c_2^4+b_2^2c_1^4+b_2^2c_2^4) \nonumber \\
    & \quad -54(x - 1)^3(7x^2 + 66x - 9)(b_1^2c_1^2c_2^2+b_2^2c_1^2c_2^2) \nonumber \\
    & \quad -3(x - 1)(1061x^4 - 5424x^3 + 3906x^2 - 648x + 81)b_1^4 \nonumber \\
    & \quad + 3(x - 1)(91x^4 + 3120x^3 - 2754x^2 + 648x - 81)b_2^4 \nonumber \\
    & \quad -6(x - 1)(485x^4 - 4272x^3 + 3330x^2 - 648x + 81)b_1^2b_2^2 \nonumber \\
    & \quad -6(x - 1)(613x^4 - 2520x^3 + 1746x^2 - 432x + 81)b_1^2c_1^2 \nonumber \\
    & \quad -6(x - 1)(37x^4 - 1368x^3 + 1170x^2 - 432x + 81)b_2^2c_2^2 \nonumber \\
    & \quad -6(x - 1)(4267x^4 - 6984x^3 + 1854x^2 + 432x - 81)b_1^2c_2^2 \nonumber \\
    & \quad -6(x - 1)(1003x^4 - 1992x^3 + 126x^2 + 432x - 81)b_2^2c_1^2 \nonumber \\
    & \quad +72(x - 1)^2(5x + 3)(7x - 3)(11x - 3)b_1b_2c_1c_2 \nonumber \\
    & \quad -9(x-1)^2(x+3)(5x+3)(11x-3)c_1^4 \nonumber \\
    & \quad -18(x-1)^2(x+3)(5x+3)(11x-3)c_1^2c_2^2 \nonumber \\
    & \quad -9(x-1)^2(x+3)(5x+3)(11x-3)c_2^4 \nonumber \\
    & \quad +48x(x - 1)(377x^3 - 321x^2 + 99x - 27)b_1^2 \nonumber \\
    & \quad + 144x(x - 1)^2(19x^2 - 24x + 9)b_2^2 \nonumber \\
    & \quad -48x(x-1)(35x^3-63x^2-63x+27)(c_1^2+c_2^2) \nonumber \\
    & \quad -192x^2(x-3)(x-1)(5x-3)]. 
\end{align}
We next present the analysis of $D_5$. 
For $x \in \left[ \frac{24\sqrt{2} - 33}{7}, \frac{33 - 12\sqrt{6}}{25} \right]$, the coefficients of all terms in $D_5$ are even and nonnegative, except for those involving $b_1^3b_2c_1c_2$, $b_1b_2^3c_1c_2$, $b_1^2c_2^2$, $b_2^2c_1^2$, and $b_1b_2c_1c_2$. Since the constant term is positive, it suffices to show that these exceptional terms are nonnegative, which proves that $D_5$ is positive. 
We analyze the terms $b_1^2c_2^2$ and $b_2^2c_1^2$ in Section~\ref{sec:D5.nonpositive}. The remaining three terms, $b_1^3b_2c_1c_2$, $b_1b_2^3c_1c_2$, and $b_1b_2c_1c_2$, are related and are discussed in Section~\ref{sec:D5.related}.

\subsubsection{Analysis of nonpositive terms in \texorpdfstring{$D_5$}{D5}}\label{sec:D5.nonpositive}
The terms involving $b_1^2c_2^2$ and $b_2^2c_1^2$ are nonpositive whenever their coefficients are nonpositive. For these terms, we complete the squares by selecting appropriate components, ensuring that the coefficients of the unknowns in the completed-square form are nonnegative for $x \in \left[ \frac{24\sqrt{2} - 33}{7}, \frac{33 - 12\sqrt{6}}{25} \right]$, thereby proving that $D_5$ is positive. The details of the square completion are presented in Eq.~\eqref{eq:D5.b1^2c2^2} for $b_1^2c_2^2$ and in Eq.~\eqref{eq:D5.b2^2c1^2} for $b_2^2c_1^2$. Due to the complexity of the expression, we show only the terms necessary for completing the squares and omit irrelevant components. 

For the term $b_1^2c_2^2$, its coefficient is nonpositive only when $x \in \left[ \frac{24\sqrt{2} - 33}{7}, x_0 \right]$, with $x_0 < \frac{33 - 12\sqrt{6}}{25}$. 
Consequently, the square completion shown in Eq.~\eqref{eq:D5.b1^2c2^2} applies solely on this subinterval. In contrast, for the term $b_2^2c_1^2$, the coefficient remains nonpositive throughout the entire interval $x \in \left[ \frac{24\sqrt{2} - 33}{7}, \frac{33 - 12\sqrt{6}}{25} \right]$. Therefore, Eq.~\eqref{eq:D5.b2^2c1^2} is valid on the whole interval.
\begin{align}\label{eq:D5.b1^2c2^2}
    & \quad -3(x - 1)(1061x^4 - 5424x^3 + 3906x^2 - 648x + 81)b_1^4 \nonumber \\
    & \quad -9(x-1)^2(x+3)(5x+3)(11x-3)c_2^4 \nonumber \\
    & \quad -6(x - 1)(4267x^4 - 6984x^3 + 1854x^2 + 432x - 81)b_1^2c_2^2 \nonumber \\
    & = 3(x - 1)(4267x^4 - 6984x^3 + 1854x^2 + 432x - 81)(b_1^4 + c_2^4 - 2b_1^2c_2^2) \nonumber \\
    & \quad -72x(x - 1)(222x^3 - 517x^2 + 240x - 9)b_1^4 \nonumber \\
    & \quad -24x(x - 1)(554x^3 - 825x^2 + 180x + 27)c_2^4 \nonumber \\
    & = 3(x - 1)(4267x^4 - 6984x^3 + 1854x^2 + 432x - 81)(b_1^2 - c_2^2)^2 \nonumber \\ 
    & \quad -72x(x - 1)(222x^3 - 517x^2 + 240x - 9)b_1^4 \nonumber \\
    & \quad -24x(x - 1)(554x^3 - 825x^2 + 180x + 27)c_2^4 \nonumber \\
    & \geq 0 . 
\end{align}
\begin{align}\label{eq:D5.b2^2c1^2}
    & \quad 3(x - 1)(91x^4 + 3120x^3 - 2754x^2 + 648x - 81)b_2^4 \nonumber \\
    & \quad -9(x-1)^2(x+3)(5x+3)(11x-3)c_1^4 \nonumber \\
    & \quad -6(x - 1)(1003x^4 - 1992x^3 + 126x^2 + 432x - 81)b_2^2c_1^2 \nonumber \\
    & = 3(x - 1)(1003x^4 - 1992x^3 + 126x^2 + 432x - 81)(b_2^4 + c_1^4 - 2b_2^2c_1^2) \nonumber \\
    & \quad -72x(x - 1)(38x^3 - 213x^2 + 120x - 9)b_2^4 \nonumber \\
    & \quad -24x(x - 1)(146x^3 - 201x^2 - 36x + 27)c_1^4 \nonumber \\
    & = 3(x - 1)(1003x^4 - 1992x^3 + 126x^2 + 432x - 81)(b_2^2 - c_1^2)^2 \nonumber \\ 
    & \quad -72x(x - 1)(38x^3 - 213x^2 + 120x - 9)b_2^4 \nonumber \\
    & \quad -24x(x - 1)(146x^3 - 201x^2 - 36x + 27)c_1^4 \nonumber \\
    & \geq 0 .
\end{align}

\subsubsection{Analysis of interrelated terms in \texorpdfstring{$D_5$}{D5}}\label{sec:D5.related}
The terms $b_1^3b_2c_1c_2$, $b_1b_2^3c_1c_2$, and $b_1b_2c_1c_2$ share a common factor $b_1b_2c_1c_2$ and are therefore related. 
When $b_1b_2c_1c_2 = 0$, the other two terms also vanish. 
When $b_1b_2c_1c_2$ is negative, the terms $b_1^3b_2c_1c_2$ and $b_1b_2^3c_1c_2$ are positive. 
For $b_1b_2c_1c_2$ itself, we complete squares for selected terms to prove that $D_5$ is positive, as shown in Eq.~\eqref{eq:D5.b1b2c1c2}.\\
\begin{align}\label{eq:D5.b1b2c1c2}
    & \quad -6(x - 1)(485x^4 - 4272x^3 + 3330x^2 - 648x + 81)b_1^2b_2^2 \nonumber \\
    & \quad -18(x-1)^2(x+3)(5x+3)(11x-3)c_1^2c_2^2 \nonumber \\
    & \quad -6(x - 1)(613x^4 - 2520x^3 + 1746x^2 - 432x + 81)b_1^2c_1^2 \nonumber \\
    & \quad -6(x - 1)(37x^4 - 1368x^3 + 1170x^2 - 432x + 81)b_2^2c_2^2 \nonumber \\
    & \quad +72(x - 1)^2(5x + 3)(7x - 3)(11x - 3)b_1b_2c_1c_2 \nonumber \\
    & = 18(x - 1)^2(5x + 3)(7x - 3)(11x - 3)(b_1^2b_2^2 + c_1^2c_2^2 + 2b_1b_2c_1c_2) \nonumber \\
    & \quad + 18(x - 1)^2(5x + 3)(7x - 3)(11x - 3)(b_1^2c_1^2 + b_2^2c_2^2 + 2b_1b_2c_1c_2) \nonumber \\
    & \quad -48x(x - 1)(5x - 3)(41x^2 - 114x + 9)b_1^2b_2^2 \nonumber \\
    & \quad -144x(x - 1)^2(5x + 3)(11x - 3)c_1^2c_2^2 \nonumber \\
    & \quad -48x^2(x - 1)(13x - 21)(17x - 9)b_1^2c_1^2 \nonumber \\
    & \quad -48x^2(x - 1)(149x^2 - 330x + 117)b_2^2c_2^2 \nonumber \\
    & = 18(x - 1)^2(5x + 3)(7x - 3)(11x - 3)(b_1b_2 + c_1c_2)^2 \nonumber \\
    & \quad + 18(x - 1)^2(5x + 3)(7x - 3)(11x - 3)(b_1c_1 + b_2c_2)^2 \nonumber \\ 
    & \quad -48x(x - 1)(5x - 3)(41x^2 - 114x + 9)b_1^2b_2^2 \nonumber \\
    & \quad -144x(x - 1)^2(5x + 3)(11x - 3)c_1^2c_2^2 \nonumber \\
    & \quad -48x^2(x - 1)(13x - 21)(17x - 9)b_1^2c_1^2 \nonumber \\
    & \quad -48x^2(x - 1)(149x^2 - 330x + 117)b_2^2c_2^2 \nonumber \\
    & \geq 0 .
\end{align}
When the term $b_1b_2c_1c_2$ is positive, the terms involving $b_1^3b_2c_1c_2$ and $b_1b_2^3c_1c_2$ are all negative. For these negative terms, we complete squares for some terms to prove that $D_5$ is positive, as shown in Eq.~\eqref{eq:D5.b_1^3 b_2 c_1 c_2 + b_1 b_2^3 c_1 c_2}. \\
\begin{align}\label{eq:D5.b_1^3 b_2 c_1 c_2 + b_1 b_2^3 c_1 c_2}
    & \quad -162(x - 1)^3(91 x^2 - 30 x + 3)(b_1^4 c_2^2+b_2^4 c_1^2) \nonumber \\
    & \quad -432 x(x - 1)^3(35 x - 3)(b_1^2 b_2^2 c_1^2+b_1^2 b_2^2 c_2^2) \nonumber \\
    & \quad +216(x - 1)^3(7 x - 3)(19 x - 3)(b_1^3 b_2 c_1 c_2 + b_1 b_2^3 c_1 c_2) \nonumber \\
    & = -108(x - 1)^3(7x - 3)(19x - 3)(b_1^4c_2^2+b_2^4c_1^2 + b_1^2b_2^2c_1^2+b_1^2b_2^2c_2^2 \nonumber \\ 
    & \quad \quad-2b_1^3b_2c_1c_2 - 2b_1b_2^3c_1c_2) \nonumber \\
    & \quad -54(x - 1)^3(7x^2 + 66x - 9)(b_1^4 c_2^2+b_2^4 c_1^2) \nonumber \\
    & \quad -108(x - 1)^3(7x^2 + 66x - 9)(b_1^2 b_2^2 c_1^2+b_1^2 b_2^2 c_2^2) \nonumber \\
    & = -108(x - 1)^3(7x - 3)(19x - 3)|b|^2(b_1c_2 - b_2c_1)^2 \nonumber \\
    & \quad -54(x - 1)^3(7x^2 + 66x - 9)(b_1^4 c_2^2+b_2^4 c_1^2) \nonumber \\
    & \quad -108(x - 1)^3(7x^2 + 66x - 9)(b_1^2 b_2^2 c_1^2+b_1^2 b_2^2 c_2^2) \nonumber \\
    & \geq 0 . 
\end{align}
After completing the squares, for $x \in \left[ \frac{24\sqrt{2} - 33}{7}, \frac{33 - 12\sqrt{6}}{25} \right]$, all variable terms are nonnegative and the constant term is positive. Hence $D_5$ is positive. 
\subsection{Analysis of the sixth-order leading principal minor}
\label{app:alpha_2.D6}
Substitute $b = b_1 + b_2i$ and $c = c_1 + c_2i$ into the sixth-order leading principal minors, and sort the resulting terms in descending order first by total power, then by power of $b$, then by power of $b_1$, then by power of $c$, and finally by power of $c_1$, as shown in Eq.~\eqref{eq:D6}. 
For the sixth-order leading principal minor $D_6$, since some terms become negative for $x \in \left[ \frac{24\sqrt{2} - 33}{7}, \frac{33 - 12\sqrt{6}}{25} \right]$, it is necessary to complete the squares for certain terms, along with a case analysis based on the values of $b$ and $c$. 
\begin{align}\label{eq:D6}
D_6 &= \frac{1}{24^6}\biggl[-27 (x - 1)^{2} (5x + 3) (11x - 3) (7x^{2} + 66x - 9)(b_1^6 + 3b_1^4b_2^2 + 3b_1^2b_2^4 + b_2^6 ) \nonumber \\
& \quad -27 (x - 1)^{2} (5x + 3) (11x - 3) (7x^{2} + 66x - 9)(b_1^4c_1^2 + b_2^4c_2^2) \nonumber \\
& \quad -9(x-1)^2(5x+3)(11x-3)(553x^2-114x+9)(b_1^4c_2^2+b_2^4c_1^2) \nonumber \\
& \quad +72(x - 1)^{2} (5x + 3) (7x - 3) (11x - 3) (19x - 3)(b_1^3b_2c_1c_2 + b_1b_2^3c_1c_2) \nonumber \\
& \quad -72(x-1)^2(5x+3)(11x-3)(287x^2+42x-9)(b_1^2b_2^2c_1^2+b_1^2b_2^2c_2^2) \nonumber \\
& \quad -27(x - 1)^2(5x + 3)(11x - 3)(7x^2 + 66x - 9)(b_1^2c_1^4+b_2^2c_2^4) \nonumber \\
& \quad -18(x - 1)^2(5x + 3)(11x - 3)(287x^2 + 42x - 9)(b_1^2c_1^2c_2^2 + b_2^2c_1^2c_2^2) \nonumber \\
& \quad -9(x - 1)^2(5x + 3)(11x - 3)(553x^2 - 114x + 9)(b_1^2c_2^4 + b_2^2c_1^4) \nonumber \\
& \quad +72(x - 1)^2(5x + 3)(7x - 3)(11x - 3)(19x - 3)(b_1b_2c_1^3c_2 + b_1b_2c_1c_2^3) \nonumber \\
& \quad -27 (x - 1)^{2} (5x + 3) (11x - 3) (7x^{2} + 66x - 9)( c_1^6 + 3c_1^4c_2^2 + 3c_1^2c_2^4 + c_2^6) \nonumber \\
& \quad -27(x - 1)^2(2825x^4 + 1168x^3 - 1110x^2 + 216x - 27)b_1^4 \nonumber \\
& \quad -9(x - 1)^2(1435x^4 + 1200x^3 - 2178x^2 + 648x - 81)b_2^4 \nonumber \\
& \quad -18(x - 1)^2(4955x^4 + 2352x^3 - 2754x^2 + 648x - 81)b_1^2b_2^2 \nonumber \\
& \quad -54(x - 1)^2(2825x^4 + 1168x^3 - 1110x^2 + 216x - 27)b_1^2c_1^2 \nonumber \\
& \quad +18(x - 1)^2(13355x^4 - 11808x^3 + 3438x^2 + 216x - 81)(b_1^2c_2^2 + b_2^2c_1^2) \nonumber \\
& \quad -72(x - 1)^2(9155x^4 - 4728x^3 + 342x^2 + 432x - 81)b_1b_2c_1c_2 \nonumber \\
& \quad -18(x - 1)^2(1435x^4 + 1200x^3 - 2178x^2 + 648x - 81)b_2^2c_2^2 \nonumber \\
& \quad -27(x - 1)^2(2825x^4 + 1168x^3 - 1110x^2 + 216x - 27)c_1^4 \nonumber \\
& \quad -18(x - 1)^2(4955x^4 + 2352x^3 - 2754x^2 + 648x - 81)c_1^2c_2^2 \nonumber \\
& \quad -9(x - 1)^2(1435x^4 + 1200x^3 - 2178x^2 + 648x - 81)c_2^4 \nonumber \\  
& \quad -432x(x - 1)^2(115x^3 - 75x^2 + 33x - 9)b_1^2 \nonumber \\
& \quad -144x(x - 1)^2(25x^3 - 33x^2 + 99x - 27)b_2^2 \nonumber \\
& \quad -432x(x - 1)^2(115x^3 - 75x^2 + 33x - 9)c_1^2 \nonumber \\
& \quad -144x(x - 1)^2(25x^3 - 33x^2 + 99x - 27)c_2^2 \nonumber \\
& \quad +576x^2(x-3)(x-1)^2(5x-3)\biggr].
\end{align} 
We now analyze $D_6$ as follows. First, since the coefficients of the unknowns are all even and nonnegative for $x \in \left[ \frac{24\sqrt{2} - 33}{7}, \frac{33 - 12\sqrt{6}}{25} \right]$, all terms are nonnegative except those involving $b_1^3b_2c_1c_2$, $b_1b_2^3c_1c_2$, $b_1b_2c_1^3c_2$, $b_1b_2c_1c_2^3$, $b_1^2c_2^2$, $b_2^2c_1^2$, and $b_1b_2c_1c_2$. 
Since the constant term is positive, it suffices to prove that the terms involving $b_1^3b_2c_1c_2$, $b_1b_2^3c_1c_2$, $b_1b_2c_1^3c_2$, $b_1b_2c_1c_2^3$, $b_1^2c_2^2$, $b_2^2c_1^2$, and $b_1b_2c_1c_2$ are nonnegative to conclude that $D_6$ is positive. 
We analyze the terms $b_1^2c_2^2$ and $b_2^2c_1^2$ in Section~\ref{sec:D6.nonpositive}. The remaining five terms, $b_1^3b_2c_1c_2$, $b_1b_2^3c_1c_2$, $b_1b_2c_1^3c_2$, $b_1b_2c_1c_2^3$, and $b_1b_2c_1c_2$, are related and are discussed in Section~\ref{sec:D6.related}.

\subsubsection{Analysis of nonpositive terms in \texorpdfstring{$D_6$}{D6}}\label{sec:D6.nonpositive}

The terms involving $b_1^2c_2^2$ and $b_2^2c_1^2$ are always nonpositive. For these nonpositive terms, we complete squares for some terms to prove that $D_6$ is positive, as shown in Eq.~\eqref{eq:D6.b1^2c2^2} and Eq.~\eqref{eq:D6.b2^2c1^2}.  Due to the length of the expression, we only present the terms required for completing the squares, omitting irrelevant parts.\\
\begin{align}\label{eq:D6.b1^2c2^2}
    & \quad -27(x - 1)^2(2825x^4 + 1168x^3 - 1110x^2 + 216x - 27)b_1^4 \nonumber \\
    & \quad -9(x - 1)^2(1435x^4 + 1200x^3 - 2178x^2 + 648x - 81)c_2^4 \nonumber \\
    & \quad +18(x - 1)^2(13355x^4 - 11808x^3 + 3438x^2 + 216x - 81)b_1^2c_2^2 \nonumber \\
    & = -9(x - 1)^2(13355x^4 - 11808x^3 + 3438x^2 + 216x - 81)(b_1^4 + c_2^4 - 2b_1^2c_2^2) \nonumber \\
    & \quad +144x(x - 1)^2(305x^3 - 957x^2 + 423x - 27)b_1^4 \nonumber \\
    & \quad +144x(x - 1)^2(745x^3 - 813x^2 + 351x - 27)c_2^4 \nonumber \\
    & = -9(x - 1)^2(13355x^4 - 11808x^3 + 3438x^2 + 216x - 81)(b_1^2 - c_2^2)^2 \nonumber \\ 
    & \quad +144x(x - 1)^2(305x^3 - 957x^2 + 423x - 27)b_1^4 \nonumber \\
    & \quad +144x(x - 1)^2(745x^3 - 813x^2 + 351x - 27)c_2^4 \nonumber \\
    & \geq 0 . 
\end{align}
\begin{align}\label{eq:D6.b2^2c1^2}
    & \quad -9(x - 1)^2(1435x^4 + 1200x^3 - 2178x^2 + 648x - 81)b_2^4 \nonumber \\
    & \quad -27(x - 1)^2(2825x^4 + 1168x^3 - 1110x^2 + 216x - 27)c_1^4 \nonumber \\
    & \quad +18(x - 1)^2(13355x^4 - 11808x^3 + 3438x^2 + 216x - 81)b_2^2c_1^2 \nonumber \\
    & = -9(x - 1)^2(13355x^4 - 11808x^3 + 3438x^2 + 216x - 81)(b_2^4 + c_1^4 - 2b_2^2c_1^2) \nonumber \\
    & \quad +144x(x - 1)^2(745x^3 - 813x^2 + 351x - 27)b_2^4 \nonumber \\
    & \quad +144x(x - 1)^2(305x^3 - 957x^2 + 423x - 27)c_1^4 \nonumber \\
    & = -9(x - 1)^2(13355x^4 - 11808x^3 + 3438x^2 + 216x - 81)(b_2^2 - c_1^2)^2 \nonumber \\ 
    & \quad +144x(x - 1)^2(745x^3 - 813x^2 + 351x - 27)b_2^4 \nonumber \\
    & \quad +144x(x - 1)^2(305x^3 - 957x^2 + 423x - 27)c_1^4 \nonumber \\
    & \geq 0 . 
\end{align}

\subsubsection{Analysis of interrelated terms in \texorpdfstring{$D_6$}{D6}}\label{sec:D6.related}
The terms $b_1^3b_2c_1c_2$, $b_1b_2^3c_1c_2$, $b_1b_2c_1^3c_2$ and $b_1b_2c_1c_2^3$ and $b_1b_2c_1c_2$ share a common factor $b_1b_2c_1c_2$ and are therefore related. 
When $b_1b_2c_1c_2 = 0$, all the terms involving $b_1^3b_2c_1c_2$, $b_1b_2^3c_1c_2$, $b_1b_2c_1^3c_2$ and $b_1b_2c_1c_2^3$ vanish. 
When $b_1b_2c_1c_2$ is negative, those terms are all positive. We complete squares for some terms to prove that $D_6$ is positive, as shown in Eq.~\eqref{eq:D6.b1b2c1c2}.\\
\begin{align}\label{eq:D6.b1b2c1c2}
    & \quad -18(x - 1)^2(4955x^4 + 2352x^3 - 2754x^2 + 648x - 81)(b_1^2b_2^2 + c_1^2c_2^2) \nonumber \\
    & \quad -54(x - 1)^2(2825x^4 + 1168x^3 - 1110x^2 + 216x - 27)b_1^2c_1^2 \nonumber \\
    & \quad -18(x - 1)^2(1435x^4 + 1200x^3 - 2178x^2 + 648x - 81)b_2^2c_2^2 \nonumber \\
    & \quad -72(x - 1)^2(9155x^4 - 4728x^3 + 342x^2 + 432x - 81)b_1b_2c_1c_2 \nonumber \\
    & = -18(x - 1)^2(9155x^4 - 4728x^3 + 342x^2 + 432x - 81)(b_1^2b_2^2 + c_1^2c_2^2 + 2b_1b_2c_1c_2) \nonumber \\
    & \quad -18(x - 1)^2(9155x^4 - 4728x^3 + 342x^2 + 432x - 81)(b_1^2c_1^2 + b_2^2c_2^2 + 2b_1b_2c_1c_2) \nonumber \\
    & \quad + 432x(x - 1)^3(5x - 3)(35x - 3)(b_1^2b_2^2 + c_1^2c_2^2) \nonumber \\
    & \quad  + 144x(x - 1)^2(85x^3 - 1029x^2 + 459x - 27)b_1^2c_1^2 \nonumber \\
    & \quad + 144x(x - 1)^2(965x^3 - 741x^2 + 315x - 27)b_2^2c_2^2 \nonumber \\
    & = -18(x - 1)^2(9155x^4 - 4728x^3 + 342x^2 + 432x - 81)(b_1b_2 + c_1c_2)^2 \nonumber \\ 
    & \quad -18(x - 1)^2(9155x^4 - 4728x^3 + 342x^2 + 432x - 81)(b_1c_1 + b_2c_2)^2 \nonumber \\
    & \quad + 432x(x - 1)^3(5x - 3)(35x - 3)(b_1^2b_2^2 + c_1^2c_2^2) \nonumber \\
    & \quad  + 144x(x - 1)^2(85x^3 - 1029x^2 + 459x - 27)b_1^2c_1^2 \nonumber \\
    & \quad + 144x(x - 1)^2(965x^3 - 741x^2 + 315x - 27)b_2^2c_2^2 \nonumber \\
    & \geq 0 .
\end{align}
When $b_1b_2c_1c_2$ is positive, the terms $b_1^3b_2c_1c_2$, $b_1b_2^3c_1c_2$, $b_1b_2c_1^3c_2$, and $b_1b_2c_1c_2^3$ are all negative. To handle these negative terms, we complete squares for selected terms to prove that $D_6$ is positive. The details of the square completion are given in Eq.~\eqref{eq:D5.b1^2c2^2} for $b_1^3b_2c_1c_2$ and $b_1b_2^3c_1c_2$, and in Eq.~\eqref{eq:D5.b2^2c1^2} for $b_1b_2c_1^3c_2$ and $b_1b_2c_1c_2^3$.
\begin{align}
    & \quad -9(x-1)^2(5x+3)(11x-3)(553x^2-114x+9)(b_1^4c_2^2+b_2^4c_1^2) \nonumber \\
    & \quad -72(x-1)^2(5x+3)(11x-3)(287x^2+42x-9)(b_1^2b_2^2c_1^2+b_1^2b_2^2c_2^2) \nonumber \\
    & \quad +72(x - 1)^{2} (5x + 3) (7x - 3) (11x - 3) (19x - 3)(b_1^3b_2c_1c_2 + b_1b_2^3c_1c_2) \nonumber \\
    & = -36(x - 1)^{2} (5x + 3) (7x - 3) (11x - 3) (19x - 3)(b_1^4c_2^2+b_2^4c_1^2 + b_1^2b_2^2c_1^2+b_1^2b_2^2c_2^2 \nonumber \\
    & \quad \quad - 2b_1^3b_2c_1c_2 - 2b_1b_2^3c_1c_2) \nonumber \\
    & \quad -27(x - 1)^2(5x + 3)(11x - 3)(7x^2 + 66x - 9)(b_1^4c_2^2+b_2^4c_1^2) \nonumber \\
    & \quad -324(x - 1)^2(5x + 3)(11x - 3)(49x^2 + 18x - 3)(b_1^2b_2^2c_1^2+b_1^2b_2^2c_2^2) \nonumber \\
    & = -36(x - 1)^{2} (5x + 3) (7x - 3) (11x - 3) (19x - 3)|b|^2(b_1c_2 - b_2c_1)^2 \nonumber \\
    & \quad -27(x - 1)^2(5x + 3)(11x - 3)(7x^2 + 66x - 9)(b_1^4c_2^2+b_2^4c_1^2) \nonumber \\
    & \quad -324(x - 1)^2(5x + 3)(11x - 3)(49x^2 + 18x - 3)(b_1^2b_2^2c_1^2+b_1^2b_2^2c_2^2) \nonumber \\
    & \geq 0 . 
\end{align}
\begin{align}
    & \quad -9(x - 1)^2(5x + 3)(11x - 3)(553x^2 - 114x + 9)(b_1^2c_2^4 + b_2^2c_1^4) \nonumber \\
    & \quad -18(x - 1)^2(5x + 3)(11x - 3)(287x^2 + 42x - 9)(b_1^2c_1^2c_2^2 + b_2^2c_1^2c_2^2) \nonumber \\
    & \quad +72(x - 1)^2(5x + 3)(7x - 3)(11x - 3)(19x - 3)(b_1b_2c_1^3c_2 + b_1b_2c_1c_2^3) \nonumber \\
    & = -36(x - 1)^2(5x + 3)(7x - 3)(11x - 3)(19x - 3)(b_1^2c_2^4 + b_2^2c_1^4 + b_1^2c_1^2c_2^2 + b_2^2c_1^2c_2^2 \nonumber \\
    & \quad \quad -2b_1b_2c_1^3c_2 - 2b_1b_2c_1c_2^3) \nonumber \\
    & \quad -27(x - 1)^2(5x + 3)(11x - 3)(7x^2 + 66x - 9)(b_1^2c_2^4 + b_2^2c_1^4) \nonumber \\
    & \quad -54(x - 1)^2(5x + 3)(11x - 3)(7x^2 + 66x - 9)(b_1^2c_1^2c_2^2 + b_2^2c_1^2c_2^2) \nonumber \\
    & = -36(x - 1)^2(5x + 3)(7x - 3)(11x - 3)(19x - 3)|c|^2(b_1c_2-b_2c_1)^2 \nonumber \\
    & \quad -27(x - 1)^2(5x + 3)(11x - 3)(7x^2 + 66x - 9)(b_1^2c_2^4 + b_2^2c_1^4) \nonumber \\
    & \quad -54(x - 1)^2(5x + 3)(11x - 3)(7x^2 + 66x - 9)(b_1^2c_1^2c_2^2 + b_2^2c_1^2c_2^2) \nonumber \\
    & \geq 0 . 
\end{align}

After completing the squares, for $x \in \left[ \frac{24\sqrt{2} - 33}{7}, \frac{33 - 12\sqrt{6}}{25} \right]$, all variable terms are nonnegative and the constant term is positive. From the above analysis, we conclude that $D_6$ is positive. 
Consequently, the first six leading principal minors are positive.
\section{The spectral decomposition and base of \texorpdfstring{$\rho^{\Gamma}$}{rho^{\Gamma}}}
\label{sec:spec}
This section presents the spectral decomposition of the partially transposed density matrix $\rho^{\Gamma}$, together with the explicit expressions for its eigenvalues and eigenvectors. Based on this decomposition and the bipartition defined in Eq.~\eqref{eq:sigma}, we construct the basis states for the symmetric and antisymmetric subspaces $\mathcal{N}(\sigma)$ and $\mathcal{P}(\sigma)$. The spectral decomposition and the constructed bases are of fundamental importance for the subsequent analysis.

The spectral decomposition of $\rho^{\Gamma}$ is given by $\rho^{\Gamma} = \sum_{i=1}^{9} \lambda_i \ket{a_i}\bra{a_i}$, where the eigenvalues $\lambda_i$ and the corresponding eigenvectors $\ket{a_i}$ are specified as follows.
\begin{eqnarray}
\label{eq:spectral of rho^Gamma}
    &&
    \lambda_1 = \dfrac{3+7x + 3\sqrt{11x^2-10x+3}}{24},\notag\\
    &&
    |a_1\rangle \propto (1-x)\bigl(|00\rangle+|11\rangle\bigr) - \bigl(1-3x-\sqrt{11x^2-10x+3}\bigr)|22\rangle,\notag\\[4pt]
    &&
    \lambda_2 = \dfrac{3-5x}{12},\quad
    |a_2\rangle = \dfrac{|01\rangle - |10\rangle}{\sqrt{2}},\notag\\[4pt]
    &&
    \lambda_3 = \lambda_4 = \dfrac{3+5x}{24},\quad
    |a_3\rangle = \dfrac{|02\rangle - |20\rangle}{\sqrt{2}},\quad
    |a_4\rangle = \dfrac{|12\rangle - |21\rangle}{\sqrt{2}},\notag\\[4pt]
    &&
    \lambda_5 = \lambda_6 = \dfrac{3-11x}{24},\quad
    |a_5\rangle = \dfrac{|02\rangle + |20\rangle}{\sqrt{2}},\quad
    |a_6\rangle = \dfrac{|12\rangle + |21\rangle}{\sqrt{2}},\notag\\[4pt]
    &&
    \lambda_7 = \lambda_8 = \frac{x}{6},\quad
    |a_7\rangle = \dfrac{|00\rangle - |11\rangle}{\sqrt{2}},\quad
    |a_8\rangle = \dfrac{|01\rangle + |10\rangle}{\sqrt{2}},\notag\\[4pt]
    &&
    \lambda_9 = \dfrac{3+7x - 3\sqrt{11x^2-10x+3}}{24},\notag\\
    &&
    |a_9\rangle \propto (1-x)\bigl(|00\rangle+|11\rangle\bigr) - \bigl(1-3x+\sqrt{11x^2-10x+3}\bigr)|22\rangle. 
\notag\\
\end{eqnarray}
The normalization constants for the unnormalized states $\ket{a_1}$ and $\ket{a_9}$ are respectively
$
N_\pm = \sqrt{2(1-x)^2 + \bigl(1-3x \mp \sqrt{11x^2-10x+3}\bigr)^2}
$. 
For $x\in \bigl[\frac{24\sqrt{2} - 33}{7},\frac{33 - 12\sqrt{6}}{25}\bigr]$, the eigenvalues obey the strict ordering 
\[
\lambda_1 > \lambda_2 > \lambda_3 = \lambda_4 > \lambda_5 = \lambda_6 > \lambda_7 = \lambda_8 > 3\abs{\lambda_9} > \abs{\lambda_9} > 0 > \lambda_9.
\]
According to the bipartition defined in Eq.~\eqref{eq:sigma} and Eq.~\eqref{eq:spectral of rho^Gamma}, $\sigma$ is a linear combination of symmetric and antisymmetric states. 
The symmetric part consists of tensor products of two symmetric components of $\rho^{\Gamma}$ or two antisymmetric components of $\rho^{\Gamma}$. The antisymmetric part consists of tensor products of one symmetric component and one antisymmetric component of $\rho^{\Gamma}$. 

The bases of the subspaces $\mathcal{N}(\sigma)$ and $\mathcal{P}(\sigma)$ are given as follows. For $\mathcal{N}(\sigma)$, the symmetric basis states are $\ket{a_9 a_i}$ and $\ket{a_i a_9}$ with $i=1,5,6,7,8$, while the antisymmetric basis states are the same forms with $i=2,3,4$. For $\mathcal{P}(\sigma)$, the symmetric basis states consist of $\ket{a_i a_j}$ for $i,j\in\{1,5,6,7,8\}$ and for $i,j\in\{2,3,4\}$, together with $\ket{a_9 a_9}$, the antisymmetric basis states are $\ket{a_i a_j}$ with $i\in\{1,5,6,7,8\}$ and $j\in\{2,3,4\}$, and with $i$ and $j$ interchanged.

\section{Two-copy entanglement distillation}
\label{app:two-copy distill}
\subsection{The deduction of matrix $M$}
\label{app:M}
The operator $\sigma$, derived from the spectral decomposition of $\rho^{\Gamma}$, is written as
\begin{align}
\label{eq:sigma_expanded_SR2}
\sigma 
    &= \sum_{i,j=1}^{9} \lambda_i\lambda_j \ket{a_i a_j}\bra{a_i a_j}_{A_1B_1A_2B_2} \nonumber\\
    &= \sum_{i,j=1}^{9} \lambda_i\lambda_j \ket{b_{ij}}\bra{b_{ij}}_{A_1A_2B_1B_2},
\end{align} 
where $\lambda_i$ are the eigenvalues and $\ket{a_i}$ the corresponding eigenvectors. Let $(s_i)_{mn} = \langle mn|a_i\rangle$, expanding the eigenvectors in the computational basis gives
\begin{align}
\label{eq:base_SR2}
    \ket{a_i} = \sum_{m,n=0}^{2} (s_i)_{mn}\ket{m}_{A_1}\ket{n}_{B_1},
    \quad
    \ket{a_j} = \sum_{p,q=0}^{2} (s_j)_{pq}\ket{p}_{A_2}\ket{q}_{B_2},
\end{align} 
and the tensor-product states are given by
$\ket{b_{ij}} = \sum_{m,n,p,q} (s_i)_{mn}(s_j)_{pq} \ket{mp}_{A_1A_2}\otimes\ket{nq}_{B_1B_2}$. 
The two forms correspond to the same operator expressed in two different subsystem orderings, yielding two equivalent expectation values. 

We now compute $\langle\psi|\sigma|\psi\rangle$ using the coefficient vector $\mathbf{d} = \operatorname{vec}(D)$ from the basis $A_1A_2 \otimes B_1B_2$. Define vectors $y_{ij}\in\mathbb{C}^{81}$ with components
$(y_{ij})_{mpnq} = (s_i)_{mn}(s_j)_{pq}$, so that $\langle b_{ij}|\psi\rangle = y_{ij}^{\dagger}\mathbf{d}$. 
With the matrix
\begin{align}
\label{eq:M=sum ij=1,...,9}
    M = \sum_{i,j=1}^{9} \lambda_i\lambda_j \, y_{ij} y_{ij}^{\dagger}\in\mathbb{C}^{81\times81},
\end{align} 
the expectation value becomes
\begin{align}
\label{eq:quadratic form1}
    \bra{\psi}\sigma\ket{\psi}
    &= \sum_{i,j=1}^{9} \lambda_i\lambda_j \left| \langle b_{ij}|\psi\rangle \right|^2 \nonumber\\
    &= \sum_{i,j=1}^{9} \lambda_i\lambda_j \, \mathbf{d}^{\dagger} y_{ij} y_{ij}^{\dagger} \mathbf{d}
    = \mathbf{d}^{\dagger} M \mathbf{d}.
\end{align} 
\subsection{The explicit expression of $M_i$}
\label{app:Mi}
The index sets $T_1,\dots,T_{16}$ are defined as follows.
\begin{equation}
    \begin{aligned}
    T_1 &= \{1,11,21,31,41,51,61,71,81\}, &
    T_2 &= \{2,10,32,40,62,70\}, \\
    T_3 &= \{3,19,33,49,63,79\}, &
    T_4 &= \{4,14,24,28,38,48\}, \\
    T_5 &= \{7,17,27,55,65,75\}, &
    T_6 &= \{12,20,42,50,72,80\}, \\
    T_7 &= \{34,44,54,58,68,78\}, &
    T_8 &= \{5,13,29,37\}, \\
    T_9 &= \{6,22,30,46\}, &
    T_{10} &= \{8,16,56,64\}, \\
    T_{11} &= \{9,25,57,73\}, &
    T_{12} &= \{15,23,39,47\}, \\
    T_{13} &= \{18,26,66,74\}, &
    T_{14} &= \{35,43,59,67\}, \\
    T_{15} &= \{36,52,60,76\}, &
    T_{16} &= \{45,53,69,77\}.
\end{aligned} 
\end{equation}
The matrix $M_1$ is defined as the principal submatrix of $M$ obtained by taking the rows and columns whose indices lie in $T_1$.
\begin{equation}
\resizebox{\textwidth}{!}{$
\begin{pmatrix}
  \frac{x^{2}}{576} + \frac{x}{96} + \frac{1}{64} & - \frac{x^{2}}{192} - \frac{x}{96} + \frac{1}{64} & - \frac{x^{2}}{192} - \frac{x}{96} + \frac{1}{64} & - \frac{x^{2}}{192} - \frac{x}{96} + \frac{1}{64} & \frac{x^{2}}{64} - \frac{x}{32} + \frac{1}{64} & \frac{x^{2}}{64} - \frac{x}{32} + \frac{1}{64} & - \frac{x^{2}}{192} - \frac{x}{96} + \frac{1}{64} & \frac{x^{2}}{64} - \frac{x}{32} + \frac{1}{64} & \frac{x^{2}}{64} - \frac{x}{32} + \frac{1}{64}\\
  - \frac{x^{2}}{192} - \frac{x}{96} + \frac{1}{64} & \frac{x^{2}}{576} + \frac{x}{96} + \frac{1}{64} & - \frac{x^{2}}{192} - \frac{x}{96} + \frac{1}{64} & \frac{x^{2}}{64} - \frac{x}{32} + \frac{1}{64} & - \frac{x^{2}}{192} - \frac{x}{96} + \frac{1}{64} & \frac{x^{2}}{64} - \frac{x}{32} + \frac{1}{64} & \frac{x^{2}}{64} - \frac{x}{32} + \frac{1}{64} & - \frac{x^{2}}{192} - \frac{x}{96} + \frac{1}{64} & \frac{x^{2}}{64} - \frac{x}{32} + \frac{1}{64}\\
  - \frac{x^{2}}{192} - \frac{x}{96} + \frac{1}{64} & - \frac{x^{2}}{192} - \frac{x}{96} + \frac{1}{64} & \frac{x \left(x + 3\right)}{36} & \frac{x^{2}}{64} - \frac{x}{32} + \frac{1}{64} & \frac{x^{2}}{64} - \frac{x}{32} + \frac{1}{64} & \frac{x \left(1 - x\right)}{12} & \frac{x^{2}}{64} - \frac{x}{32} + \frac{1}{64} & \frac{x^{2}}{64} - \frac{x}{32} + \frac{1}{64} & \frac{x \left(1 - x\right)}{12}\\
  - \frac{x^{2}}{192} - \frac{x}{96} + \frac{1}{64} & \frac{x^{2}}{64} - \frac{x}{32} + \frac{1}{64} & \frac{x^{2}}{64} - \frac{x}{32} + \frac{1}{64} & \frac{x^{2}}{576} + \frac{x}{96} + \frac{1}{64} & - \frac{x^{2}}{192} - \frac{x}{96} + \frac{1}{64} & - \frac{x^{2}}{192} - \frac{x}{96} + \frac{1}{64} & - \frac{x^{2}}{192} - \frac{x}{96} + \frac{1}{64} & \frac{x^{2}}{64} - \frac{x}{32} + \frac{1}{64} & \frac{x^{2}}{64} - \frac{x}{32} + \frac{1}{64}\\
  \frac{x^{2}}{64} - \frac{x}{32} + \frac{1}{64} & - \frac{x^{2}}{192} - \frac{x}{96} + \frac{1}{64} & \frac{x^{2}}{64} - \frac{x}{32} + \frac{1}{64} & - \frac{x^{2}}{192} - \frac{x}{96} + \frac{1}{64} & \frac{x^{2}}{576} + \frac{x}{96} + \frac{1}{64} & - \frac{x^{2}}{192} - \frac{x}{96} + \frac{1}{64} & \frac{x^{2}}{64} - \frac{x}{32} + \frac{1}{64} & - \frac{x^{2}}{192} - \frac{x}{96} + \frac{1}{64} & \frac{x^{2}}{64} - \frac{x}{32} + \frac{1}{64}\\
  \frac{x^{2}}{64} - \frac{x}{32} + \frac{1}{64} & \frac{x^{2}}{64} - \frac{x}{32} + \frac{1}{64} & \frac{x \left(1 - x\right)}{12} & - \frac{x^{2}}{192} - \frac{x}{96} + \frac{1}{64} & - \frac{x^{2}}{192} - \frac{x}{96} + \frac{1}{64} & \frac{x \left(x + 3\right)}{36} & \frac{x^{2}}{64} - \frac{x}{32} + \frac{1}{64} & \frac{x^{2}}{64} - \frac{x}{32} + \frac{1}{64} & \frac{x \left(1 - x\right)}{12}\\
  - \frac{x^{2}}{192} - \frac{x}{96} + \frac{1}{64} & \frac{x^{2}}{64} - \frac{x}{32} + \frac{1}{64} & \frac{x^{2}}{64} - \frac{x}{32} + \frac{1}{64} & - \frac{x^{2}}{192} - \frac{x}{96} + \frac{1}{64} & \frac{x^{2}}{64} - \frac{x}{32} + \frac{1}{64} & \frac{x^{2}}{64} - \frac{x}{32} + \frac{1}{64} & \frac{x \left(x + 3\right)}{36} & \frac{x \left(1 - x\right)}{12} & \frac{x \left(1 - x\right)}{12}\\
  \frac{x^{2}}{64} - \frac{x}{32} + \frac{1}{64} & - \frac{x^{2}}{192} - \frac{x}{96} + \frac{1}{64} & \frac{x^{2}}{64} - \frac{x}{32} + \frac{1}{64} & \frac{x^{2}}{64} - \frac{x}{32} + \frac{1}{64} & - \frac{x^{2}}{192} - \frac{x}{96} + \frac{1}{64} & \frac{x^{2}}{64} - \frac{x}{32} + \frac{1}{64} & \frac{x \left(1 - x\right)}{12} & \frac{x \left(x + 3\right)}{36} & \frac{x \left(1 - x\right)}{12}\\
  \frac{x^{2}}{64} - \frac{x}{32} + \frac{1}{64} & \frac{x^{2}}{64} - \frac{x}{32} + \frac{1}{64} & \frac{x \left(1 - x\right)}{12} & \frac{x^{2}}{64} - \frac{x}{32} + \frac{1}{64} & \frac{x^{2}}{64} - \frac{x}{32} + \frac{1}{64} & \frac{x \left(1 - x\right)}{12} & \frac{x \left(1 - x\right)}{12} & \frac{x \left(1 - x\right)}{12} & \frac{4 x^{2}}{9}
\end{pmatrix}.
$}\\
\end{equation}
The matrix $M_2$ is defined as the principal submatrix of $M$ obtained by taking the rows and columns whose indices lie in $T_2$.
\begin{equation}
\resizebox{\textwidth}{!}{$
\begin{pmatrix}
- \frac{x^{2}}{192} - \frac{x}{96} + \frac{1}{64} & \frac{7 x^{2}}{576} + \frac{x}{32} - \frac{1}{64} & \frac{x^{2}}{64} - \frac{x}{32} + \frac{1}{64} & - \frac{7 x^{2}}{192} + \frac{5 x}{96} - \frac{1}{64} & \frac{x^{2}}{64} - \frac{x}{32} + \frac{1}{64} & - \frac{7 x^{2}}{192} + \frac{5 x}{96} - \frac{1}{64}\\
\frac{7 x^{2}}{576} + \frac{x}{32} - \frac{1}{64} & - \frac{x^{2}}{192} - \frac{x}{96} + \frac{1}{64} & - \frac{7 x^{2}}{192} + \frac{5 x}{96} - \frac{1}{64} & \frac{x^{2}}{64} - \frac{x}{32} + \frac{1}{64} & - \frac{7 x^{2}}{192} + \frac{5 x}{96} - \frac{1}{64} & \frac{x^{2}}{64} - \frac{x}{32} + \frac{1}{64}\\
\frac{x^{2}}{64} - \frac{x}{32} + \frac{1}{64} & - \frac{7 x^{2}}{192} + \frac{5 x}{96} - \frac{1}{64} & - \frac{x^{2}}{192} - \frac{x}{96} + \frac{1}{64} & \frac{7 x^{2}}{576} + \frac{x}{32} - \frac{1}{64} & \frac{x^{2}}{64} - \frac{x}{32} + \frac{1}{64} & - \frac{7 x^{2}}{192} + \frac{5 x}{96} - \frac{1}{64}\\
- \frac{7 x^{2}}{192} + \frac{5 x}{96} - \frac{1}{64} & \frac{x^{2}}{64} - \frac{x}{32} + \frac{1}{64} & \frac{7 x^{2}}{576} + \frac{x}{32} - \frac{1}{64} & - \frac{x^{2}}{192} - \frac{x}{96} + \frac{1}{64} & - \frac{7 x^{2}}{192} + \frac{5 x}{96} - \frac{1}{64} & \frac{x^{2}}{64} - \frac{x}{32} + \frac{1}{64}\\
\frac{x^{2}}{64} - \frac{x}{32} + \frac{1}{64} & - \frac{7 x^{2}}{192} + \frac{5 x}{96} - \frac{1}{64} & \frac{x^{2}}{64} - \frac{x}{32} + \frac{1}{64} & - \frac{7 x^{2}}{192} + \frac{5 x}{96} - \frac{1}{64} & \frac{x \left(1 - x\right)}{12} & \frac{x \left(7 x - 3\right)}{36}\\
- \frac{7 x^{2}}{192} + \frac{5 x}{96} - \frac{1}{64} & \frac{x^{2}}{64} - \frac{x}{32} + \frac{1}{64} & - \frac{7 x^{2}}{192} + \frac{5 x}{96} - \frac{1}{64} & \frac{x^{2}}{64} - \frac{x}{32} + \frac{1}{64} & \frac{x \left(7 x - 3\right)}{36} & \frac{x \left(1 - x\right)}{12}
\end{pmatrix}.
$}
\end{equation}
The matrix $M_3$ is defined as the principal submatrix of $M$ obtained by taking the rows and columns whose indices lie in $T_3$.
\begin{equation}
\resizebox{\textwidth}{!}{$
\begin{pmatrix}
- \frac{x^{2}}{192} - \frac{x}{96} + \frac{1}{64} & \frac{x \left(- x - 3\right)}{72} & \frac{x^{2}}{64} - \frac{x}{32} + \frac{1}{64} & \frac{x \left(x - 1\right)}{24} & \frac{x^{2}}{64} - \frac{x}{32} + \frac{1}{64} & \frac{x \left(x - 1\right)}{24}\\
\frac{x \left(- x - 3\right)}{72} & - \frac{x^{2}}{192} - \frac{x}{96} + \frac{1}{64} & \frac{x \left(x - 1\right)}{24} & \frac{x^{2}}{64} - \frac{x}{32} + \frac{1}{64} & \frac{x \left(x - 1\right)}{24} & \frac{x^{2}}{64} - \frac{x}{32} + \frac{1}{64}\\
\frac{x^{2}}{64} - \frac{x}{32} + \frac{1}{64} & \frac{x \left(x - 1\right)}{24} & - \frac{x^{2}}{192} - \frac{x}{96} + \frac{1}{64} & \frac{x \left(- x - 3\right)}{72} & \frac{x^{2}}{64} - \frac{x}{32} + \frac{1}{64} & \frac{x \left(x - 1\right)}{24}\\
\frac{x \left(x - 1\right)}{24} & \frac{x^{2}}{64} - \frac{x}{32} + \frac{1}{64} & \frac{x \left(- x - 3\right)}{72} & - \frac{x^{2}}{192} - \frac{x}{96} + \frac{1}{64} & \frac{x \left(x - 1\right)}{24} & \frac{x^{2}}{64} - \frac{x}{32} + \frac{1}{64}\\
\frac{x^{2}}{64} - \frac{x}{32} + \frac{1}{64} & \frac{x \left(x - 1\right)}{24} & \frac{x^{2}}{64} - \frac{x}{32} + \frac{1}{64} & \frac{x \left(x - 1\right)}{24} & \frac{x \left(1 - x\right)}{12} & - \frac{2 x^{2}}{9}\\
\frac{x \left(x - 1\right)}{24} & \frac{x^{2}}{64} - \frac{x}{32} + \frac{1}{64} & \frac{x \left(x - 1\right)}{24} & \frac{x^{2}}{64} - \frac{x}{32} + \frac{1}{64} & - \frac{2 x^{2}}{9} & \frac{x \left(1 - x\right)}{12}
\end{pmatrix}.
$}
\end{equation}
The matrix $M_4$ is defined as the principal submatrix of $M$ obtained by taking the rows and columns whose indices lie in $T_4$.
\begin{equation}
\resizebox{\textwidth}{!}{$
\begin{pmatrix}
- \frac{x^{2}}{192} - \frac{x}{96} + \frac{1}{64} & \frac{x^{2}}{64} - \frac{x}{32} + \frac{1}{64} & \frac{x^{2}}{64} - \frac{x}{32} + \frac{1}{64} & \frac{7 x^{2}}{576} + \frac{x}{32} - \frac{1}{64} & - \frac{7 x^{2}}{192} + \frac{5 x}{96} - \frac{1}{64} & - \frac{7 x^{2}}{192} + \frac{5 x}{96} - \frac{1}{64}\\
\frac{x^{2}}{64} - \frac{x}{32} + \frac{1}{64} & - \frac{x^{2}}{192} - \frac{x}{96} + \frac{1}{64} & \frac{x^{2}}{64} - \frac{x}{32} + \frac{1}{64} & - \frac{7 x^{2}}{192} + \frac{5 x}{96} - \frac{1}{64} & \frac{7 x^{2}}{576} + \frac{x}{32} - \frac{1}{64} & - \frac{7 x^{2}}{192} + \frac{5 x}{96} - \frac{1}{64}\\
\frac{x^{2}}{64} - \frac{x}{32} + \frac{1}{64} & \frac{x^{2}}{64} - \frac{x}{32} + \frac{1}{64} & \frac{x \left(1 - x\right)}{12} & - \frac{7 x^{2}}{192} + \frac{5 x}{96} - \frac{1}{64} & - \frac{7 x^{2}}{192} + \frac{5 x}{96} - \frac{1}{64} & \frac{x \left(7 x - 3\right)}{36}\\
\frac{7 x^{2}}{576} + \frac{x}{32} - \frac{1}{64} & - \frac{7 x^{2}}{192} + \frac{5 x}{96} - \frac{1}{64} & - \frac{7 x^{2}}{192} + \frac{5 x}{96} - \frac{1}{64} & - \frac{x^{2}}{192} - \frac{x}{96} + \frac{1}{64} & \frac{x^{2}}{64} - \frac{x}{32} + \frac{1}{64} & \frac{x^{2}}{64} - \frac{x}{32} + \frac{1}{64}\\
- \frac{7 x^{2}}{192} + \frac{5 x}{96} - \frac{1}{64} & \frac{7 x^{2}}{576} + \frac{x}{32} - \frac{1}{64} & - \frac{7 x^{2}}{192} + \frac{5 x}{96} - \frac{1}{64} & \frac{x^{2}}{64} - \frac{x}{32} + \frac{1}{64} & - \frac{x^{2}}{192} - \frac{x}{96} + \frac{1}{64} & \frac{x^{2}}{64} - \frac{x}{32} + \frac{1}{64}\\
- \frac{7 x^{2}}{192} + \frac{5 x}{96} - \frac{1}{64} & - \frac{7 x^{2}}{192} + \frac{5 x}{96} - \frac{1}{64} & \frac{x \left(7 x - 3\right)}{36} & \frac{x^{2}}{64} - \frac{x}{32} + \frac{1}{64} & \frac{x^{2}}{64} - \frac{x}{32} + \frac{1}{64} & \frac{x \left(1 - x\right)}{12}
\end{pmatrix}.
$}
\end{equation}
The matrix $M_5$ is defined as the principal submatrix of $M$ obtained by taking the rows and columns whose indices lie in $T_5$.
\begin{equation}
\resizebox{\textwidth}{!}{$
\begin{pmatrix}
- \frac{x^{2}}{192} - \frac{x}{96} + \frac{1}{64} & \frac{x^{2}}{64} - \frac{x}{32} + \frac{1}{64} & \frac{x^{2}}{64} - \frac{x}{32} + \frac{1}{64} & \frac{x \left(- x - 3\right)}{72} & \frac{x \left(x - 1\right)}{24} & \frac{x \left(x - 1\right)}{24}\\
\frac{x^{2}}{64} - \frac{x}{32} + \frac{1}{64} & - \frac{x^{2}}{192} - \frac{x}{96} + \frac{1}{64} & \frac{x^{2}}{64} - \frac{x}{32} + \frac{1}{64} & \frac{x \left(x - 1\right)}{24} & \frac{x \left(- x - 3\right)}{72} & \frac{x \left(x - 1\right)}{24}\\
\frac{x^{2}}{64} - \frac{x}{32} + \frac{1}{64} & \frac{x^{2}}{64} - \frac{x}{32} + \frac{1}{64} & \frac{x \left(1 - x\right)}{12} & \frac{x \left(x - 1\right)}{24} & \frac{x \left(x - 1\right)}{24} & - \frac{2 x^{2}}{9}\\
\frac{x \left(- x - 3\right)}{72} & \frac{x \left(x - 1\right)}{24} & \frac{x \left(x - 1\right)}{24} & - \frac{x^{2}}{192} - \frac{x}{96} + \frac{1}{64} & \frac{x^{2}}{64} - \frac{x}{32} + \frac{1}{64} & \frac{x^{2}}{64} - \frac{x}{32} + \frac{1}{64}\\
\frac{x \left(x - 1\right)}{24} & \frac{x \left(- x - 3\right)}{72} & \frac{x \left(x - 1\right)}{24} & \frac{x^{2}}{64} - \frac{x}{32} + \frac{1}{64} & - \frac{x^{2}}{192} - \frac{x}{96} + \frac{1}{64} & \frac{x^{2}}{64} - \frac{x}{32} + \frac{1}{64}\\
\frac{x \left(x - 1\right)}{24} & \frac{x \left(x - 1\right)}{24} & - \frac{2 x^{2}}{9} & \frac{x^{2}}{64} - \frac{x}{32} + \frac{1}{64} & \frac{x^{2}}{64} - \frac{x}{32} + \frac{1}{64} & \frac{x \left(1 - x\right)}{12}
\end{pmatrix}.
$}
\end{equation}
The matrix $M_6$ is defined as the principal submatrix of $M$ obtained by taking the rows and columns whose indices lie in $T_6$.
\begin{equation}
\resizebox{\textwidth}{!}{$
\begin{pmatrix}
- \frac{x^{2}}{192} - \frac{x}{96} + \frac{1}{64} & \frac{x \left(- x - 3\right)}{72} & \frac{x^{2}}{64} - \frac{x}{32} + \frac{1}{64} & \frac{x \left(x - 1\right)}{24} & \frac{x^{2}}{64} - \frac{x}{32} + \frac{1}{64} & \frac{x \left(x - 1\right)}{24}\\
\frac{x \left(- x - 3\right)}{72} & - \frac{x^{2}}{192} - \frac{x}{96} + \frac{1}{64} & \frac{x \left(x - 1\right)}{24} & \frac{x^{2}}{64} - \frac{x}{32} + \frac{1}{64} & \frac{x \left(x - 1\right)}{24} & \frac{x^{2}}{64} - \frac{x}{32} + \frac{1}{64}\\
\frac{x^{2}}{64} - \frac{x}{32} + \frac{1}{64} & \frac{x \left(x - 1\right)}{24} & - \frac{x^{2}}{192} - \frac{x}{96} + \frac{1}{64} & \frac{x \left(- x - 3\right)}{72} & \frac{x^{2}}{64} - \frac{x}{32} + \frac{1}{64} & \frac{x \left(x - 1\right)}{24}\\
\frac{x \left(x - 1\right)}{24} & \frac{x^{2}}{64} - \frac{x}{32} + \frac{1}{64} & \frac{x \left(- x - 3\right)}{72} & - \frac{x^{2}}{192} - \frac{x}{96} + \frac{1}{64} & \frac{x \left(x - 1\right)}{24} & \frac{x^{2}}{64} - \frac{x}{32} + \frac{1}{64}\\
\frac{x^{2}}{64} - \frac{x}{32} + \frac{1}{64} & \frac{x \left(x - 1\right)}{24} & \frac{x^{2}}{64} - \frac{x}{32} + \frac{1}{64} & \frac{x \left(x - 1\right)}{24} & \frac{x \left(1 - x\right)}{12} & - \frac{2 x^{2}}{9}\\
\frac{x \left(x - 1\right)}{24} & \frac{x^{2}}{64} - \frac{x}{32} + \frac{1}{64} & \frac{x \left(x - 1\right)}{24} & \frac{x^{2}}{64} - \frac{x}{32} + \frac{1}{64} & - \frac{2 x^{2}}{9} & \frac{x \left(1 - x\right)}{12}
\end{pmatrix}.
$}
\end{equation}
The matrix $M_7$ is defined as the principal submatrix of $M$ obtained by taking the rows and columns whose indices lie in $T_7$.
\begin{equation}
\resizebox{\textwidth}{!}{$
\begin{pmatrix}
- \frac{x^{2}}{192} - \frac{x}{96} + \frac{1}{64} & \frac{x^{2}}{64} - \frac{x}{32} + \frac{1}{64} & \frac{x^{2}}{64} - \frac{x}{32} + \frac{1}{64} & \frac{x \left(- x - 3\right)}{72} & \frac{x \left(x - 1\right)}{24} & \frac{x \left(x - 1\right)}{24}\\
\frac{x^{2}}{64} - \frac{x}{32} + \frac{1}{64} & - \frac{x^{2}}{192} - \frac{x}{96} + \frac{1}{64} & \frac{x^{2}}{64} - \frac{x}{32} + \frac{1}{64} & \frac{x \left(x - 1\right)}{24} & \frac{x \left(- x - 3\right)}{72} & \frac{x \left(x - 1\right)}{24}\\
\frac{x^{2}}{64} - \frac{x}{32} + \frac{1}{64} & \frac{x^{2}}{64} - \frac{x}{32} + \frac{1}{64} & \frac{x \left(1 - x\right)}{12} & \frac{x \left(x - 1\right)}{24} & \frac{x \left(x - 1\right)}{24} & - \frac{2 x^{2}}{9}\\
\frac{x \left(- x - 3\right)}{72} & \frac{x \left(x - 1\right)}{24} & \frac{x \left(x - 1\right)}{24} & - \frac{x^{2}}{192} - \frac{x}{96} + \frac{1}{64} & \frac{x^{2}}{64} - \frac{x}{32} + \frac{1}{64} & \frac{x^{2}}{64} - \frac{x}{32} + \frac{1}{64}\\
\frac{x \left(x - 1\right)}{24} & \frac{x \left(- x - 3\right)}{72} & \frac{x \left(x - 1\right)}{24} & \frac{x^{2}}{64} - \frac{x}{32} + \frac{1}{64} & - \frac{x^{2}}{192} - \frac{x}{96} + \frac{1}{64} & \frac{x^{2}}{64} - \frac{x}{32} + \frac{1}{64}\\
\frac{x \left(x - 1\right)}{24} & \frac{x \left(x - 1\right)}{24} & - \frac{2 x^{2}}{9} & \frac{x^{2}}{64} - \frac{x}{32} + \frac{1}{64} & \frac{x^{2}}{64} - \frac{x}{32} + \frac{1}{64} & \frac{x \left(1 - x\right)}{12}
\end{pmatrix}.
$}
\end{equation}
The matrix $M_8$ is defined as the principal submatrix of $M$ obtained by taking the rows and columns whose indices lie in $T_8$.
\begin{equation}
\resizebox{0.85\textwidth}{!}{$
\begin{pmatrix}
\frac{x^{2}}{64} - \frac{x}{32} + \frac{1}{64} & - \frac{7 x^{2}}{192} + \frac{5 x}{96} - \frac{1}{64} & - \frac{7 x^{2}}{192} + \frac{5 x}{96} - \frac{1}{64} & \frac{49 x^{2}}{576} - \frac{7 x}{96} + \frac{1}{64}\\
- \frac{7 x^{2}}{192} + \frac{5 x}{96} - \frac{1}{64} & \frac{x^{2}}{64} - \frac{x}{32} + \frac{1}{64} & \frac{49 x^{2}}{576} - \frac{7 x}{96} + \frac{1}{64} & - \frac{7 x^{2}}{192} + \frac{5 x}{96} - \frac{1}{64}\\
- \frac{7 x^{2}}{192} + \frac{5 x}{96} - \frac{1}{64} & \frac{49 x^{2}}{576} - \frac{7 x}{96} + \frac{1}{64} & \frac{x^{2}}{64} - \frac{x}{32} + \frac{1}{64} & - \frac{7 x^{2}}{192} + \frac{5 x}{96} - \frac{1}{64}\\
\frac{49 x^{2}}{576} - \frac{7 x}{96} + \frac{1}{64} & - \frac{7 x^{2}}{192} + \frac{5 x}{96} - \frac{1}{64} & - \frac{7 x^{2}}{192} + \frac{5 x}{96} - \frac{1}{64} & \frac{x^{2}}{64} - \frac{x}{32} + \frac{1}{64}
\end{pmatrix}.
$}
\end{equation}
The matrix $M_9$ is defined as the principal submatrix of $M$ obtained by taking the rows and columns whose indices lie in $T_9$.
\begin{equation}
\resizebox{0.85\textwidth}{!}{$
\begin{pmatrix}
\frac{x^{2}}{64} - \frac{x}{32} + \frac{1}{64} & \frac{x \left(x - 1\right)}{24} & - \frac{7 x^{2}}{192} + \frac{5 x}{96} - \frac{1}{64} & \frac{x \left(3 - 7 x\right)}{72}\\
\frac{x \left(x - 1\right)}{24} & \frac{x^{2}}{64} - \frac{x}{32} + \frac{1}{64} & \frac{x \left(3 - 7 x\right)}{72} & - \frac{7 x^{2}}{192} + \frac{5 x}{96} - \frac{1}{64}\\
- \frac{7 x^{2}}{192} + \frac{5 x}{96} - \frac{1}{64} & \frac{x \left(3 - 7 x\right)}{72} & \frac{x^{2}}{64} - \frac{x}{32} + \frac{1}{64} & \frac{x \left(x - 1\right)}{24}\\
\frac{x \left(3 - 7 x\right)}{72} & - \frac{7 x^{2}}{192} + \frac{5 x}{96} - \frac{1}{64} & \frac{x \left(x - 1\right)}{24} & \frac{x^{2}}{64} - \frac{x}{32} + \frac{1}{64}
\end{pmatrix}.
$}
\end{equation}
The matrix $M_{10}$ is defined as the principal submatrix of $M$ obtained by taking the rows and columns whose indices lie in $T_{10}$.
\begin{equation}
\resizebox{0.85\textwidth}{!}{$
\begin{pmatrix}
\frac{x^{2}}{64} - \frac{x}{32} + \frac{1}{64} & - \frac{7 x^{2}}{192} + \frac{5 x}{96} - \frac{1}{64} & \frac{x \left(x - 1\right)}{24} & \frac{x \left(3 - 7 x\right)}{72}\\
- \frac{7 x^{2}}{192} + \frac{5 x}{96} - \frac{1}{64} & \frac{x^{2}}{64} - \frac{x}{32} + \frac{1}{64} & \frac{x \left(3 - 7 x\right)}{72} & \frac{x \left(x - 1\right)}{24}\\
\frac{x \left(x - 1\right)}{24} & \frac{x \left(3 - 7 x\right)}{72} & \frac{x^{2}}{64} - \frac{x}{32} + \frac{1}{64} & - \frac{7 x^{2}}{192} + \frac{5 x}{96} - \frac{1}{64}\\
\frac{x \left(3 - 7 x\right)}{72} & \frac{x \left(x - 1\right)}{24} & - \frac{7 x^{2}}{192} + \frac{5 x}{96} - \frac{1}{64} & \frac{x^{2}}{64} - \frac{x}{32} + \frac{1}{64}
\end{pmatrix}.
$}
\end{equation}

The matrix $M_{11}$ is defined as the principal submatrix of $M$ obtained by taking the rows and columns whose indices lie in $T_{11}$.
\begin{equation}
\resizebox{0.85\textwidth}{!}{$
\begin{pmatrix}
\frac{x^{2}}{64} - \frac{x}{32} + \frac{1}{64} & \frac{x \left(x - 1\right)}{24} & \frac{x \left(x - 1\right)}{24} & \frac{x^{2}}{9}\\
\frac{x \left(x - 1\right)}{24} & \frac{x^{2}}{64} - \frac{x}{32} + \frac{1}{64} & \frac{x^{2}}{9} & \frac{x \left(x - 1\right)}{24}\\
\frac{x \left(x - 1\right)}{24} & \frac{x^{2}}{9} & \frac{x^{2}}{64} - \frac{x}{32} + \frac{1}{64} & \frac{x \left(x - 1\right)}{24}\\
\frac{x^{2}}{9} & \frac{x \left(x - 1\right)}{24} & \frac{x \left(x - 1\right)}{24} & \frac{x^{2}}{64} - \frac{x}{32} + \frac{1}{64}
\end{pmatrix}.
$}
\end{equation}

The matrix $M_{12}$ is defined as the principal submatrix of $M$ obtained by taking the rows and columns whose indices lie in $T_{12}$.
\begin{equation}
\resizebox{0.85\textwidth}{!}{$
\begin{pmatrix}
\frac{x^{2}}{64} - \frac{x}{32} + \frac{1}{64} & \frac{x \left(x - 1\right)}{24} & - \frac{7 x^{2}}{192} + \frac{5 x}{96} - \frac{1}{64} & \frac{x \left(3 - 7 x\right)}{72}\\
\frac{x \left(x - 1\right)}{24} & \frac{x^{2}}{64} - \frac{x}{32} + \frac{1}{64} & \frac{x \left(3 - 7 x\right)}{72} & - \frac{7 x^{2}}{192} + \frac{5 x}{96} - \frac{1}{64}\\
- \frac{7 x^{2}}{192} + \frac{5 x}{96} - \frac{1}{64} & \frac{x \left(3 - 7 x\right)}{72} & \frac{x^{2}}{64} - \frac{x}{32} + \frac{1}{64} & \frac{x \left(x - 1\right)}{24}\\
\frac{x \left(3 - 7 x\right)}{72} & - \frac{7 x^{2}}{192} + \frac{5 x}{96} - \frac{1}{64} & \frac{x \left(x - 1\right)}{24} & \frac{x^{2}}{64} - \frac{x}{32} + \frac{1}{64}
\end{pmatrix}.
$}
\end{equation}

The matrix $M_{13}$ is defined as the principal submatrix of $M$ obtained by taking the rows and columns whose indices lie in $T_{13}$.
\begin{equation}
\resizebox{0.85\textwidth}{!}{$
\begin{pmatrix}
\frac{x^{2}}{64} - \frac{x}{32} + \frac{1}{64} & \frac{x \left(x - 1\right)}{24} & \frac{x \left(x - 1\right)}{24} & \frac{x^{2}}{9}\\
\frac{x \left(x - 1\right)}{24} & \frac{x^{2}}{64} - \frac{x}{32} + \frac{1}{64} & \frac{x^{2}}{9} & \frac{x \left(x - 1\right)}{24}\\
\frac{x \left(x - 1\right)}{24} & \frac{x^{2}}{9} & \frac{x^{2}}{64} - \frac{x}{32} + \frac{1}{64} & \frac{x \left(x - 1\right)}{24}\\
\frac{x^{2}}{9} & \frac{x \left(x - 1\right)}{24} & \frac{x \left(x - 1\right)}{24} & \frac{x^{2}}{64} - \frac{x}{32} + \frac{1}{64}
\end{pmatrix}.
$}
\end{equation}

The matrix $M_{14}$ is defined as the principal submatrix of $M$ obtained by taking the rows and columns whose indices lie in $T_{14}$.
\begin{equation}
\resizebox{0.85\textwidth}{!}{$
\begin{pmatrix}
\frac{x^{2}}{64} - \frac{x}{32} + \frac{1}{64} & - \frac{7 x^{2}}{192} + \frac{5 x}{96} - \frac{1}{64} & \frac{x \left(x - 1\right)}{24} & \frac{x \left(3 - 7 x\right)}{72}\\
- \frac{7 x^{2}}{192} + \frac{5 x}{96} - \frac{1}{64} & \frac{x^{2}}{64} - \frac{x}{32} + \frac{1}{64} & \frac{x \left(3 - 7 x\right)}{72} & \frac{x \left(x - 1\right)}{24}\\
\frac{x \left(x - 1\right)}{24} & \frac{x \left(3 - 7 x\right)}{72} & \frac{x^{2}}{64} - \frac{x}{32} + \frac{1}{64} & - \frac{7 x^{2}}{192} + \frac{5 x}{96} - \frac{1}{64}\\
\frac{x \left(3 - 7 x\right)}{72} & \frac{x \left(x - 1\right)}{24} & - \frac{7 x^{2}}{192} + \frac{5 x}{96} - \frac{1}{64} & \frac{x^{2}}{64} - \frac{x}{32} + \frac{1}{64}
\end{pmatrix}.
$}
\end{equation}

The matrix $M_{15}$ is defined as the principal submatrix of $M$ obtained by taking the rows and columns whose indices lie in $T_{15}$.
\begin{equation}
\resizebox{0.85\textwidth}{!}{$
\begin{pmatrix}
\frac{x^{2}}{64} - \frac{x}{32} + \frac{1}{64} & \frac{x \left(x - 1\right)}{24} & \frac{x \left(x - 1\right)}{24} & \frac{x^{2}}{9}\\
\frac{x \left(x - 1\right)}{24} & \frac{x^{2}}{64} - \frac{x}{32} + \frac{1}{64} & \frac{x^{2}}{9} & \frac{x \left(x - 1\right)}{24}\\
\frac{x \left(x - 1\right)}{24} & \frac{x^{2}}{9} & \frac{x^{2}}{64} - \frac{x}{32} + \frac{1}{64} & \frac{x \left(x - 1\right)}{24}\\
\frac{x^{2}}{9} & \frac{x \left(x - 1\right)}{24} & \frac{x \left(x - 1\right)}{24} & \frac{x^{2}}{64} - \frac{x}{32} + \frac{1}{64}
\end{pmatrix}.
$}
\end{equation}

The matrix $M_{16}$ is defined as the principal submatrix of $M$ obtained by taking the rows and columns whose indices lie in $T_{16}$.
\begin{equation}
\resizebox{0.85\textwidth}{!}{$
\begin{pmatrix}
\frac{x^{2}}{64} - \frac{x}{32} + \frac{1}{64} & \frac{x \left(x - 1\right)}{24} & \frac{x \left(x - 1\right)}{24} & \frac{x^{2}}{9}\\
\frac{x \left(x - 1\right)}{24} & \frac{x^{2}}{64} - \frac{x}{32} + \frac{1}{64} & \frac{x^{2}}{9} & \frac{x \left(x - 1\right)}{24}\\
\frac{x \left(x - 1\right)}{24} & \frac{x^{2}}{9} & \frac{x^{2}}{64} - \frac{x}{32} + \frac{1}{64} & \frac{x \left(x - 1\right)}{24}\\
\frac{x^{2}}{9} & \frac{x \left(x - 1\right)}{24} & \frac{x \left(x - 1\right)}{24} & \frac{x^{2}}{64} - \frac{x}{32} + \frac{1}{64}
\end{pmatrix}.
$}
\end{equation}

\end{document}